\numberwithin{equation}{section} 
\newtheorem{theorem}{Theorem}[section]
\newtheorem{proposition}[theorem]{Proposition}
\newtheorem{remark}[theorem]{Remark}
\newtheorem{lemma}[theorem]{Lemma}
\newtheorem{corollary}[theorem]{Corollary}
\newcommand\1{{\ensuremath {\mathds 1} }} 
\def\C{{\mathbb C}}
\def\N{{\mathbb N}}
\def\R{{\mathbb R}}
\def\RR{{\mathbb R}}
\def\Z{{\mathbb Z}}
\def\ZZ{{\mathbb Z}}
\def\ba {{ \bold a}}
\def\bk{{\bold k}}
\def\bm{{\bold m}}
\def\bQ{{\bold Q}}
\def\bq{{\bold q}}
\def\bR{{\bold R}}
\def\bx{{\bold x}}
\def\by{{\bold y}}
\def\bz{{\bold z}}
\def\bnull{{\bold 0}}
\def\rd{{\mathrm{d}}}
\def\re{{\mathrm{e}}}
\def\ri{{\mathrm{i}}}
\def\Re{{\mathrm{Re }\, }}
\def\Im{{\mathrm{Im }\, }}
\def\Tr{{\rm Tr}}
\def\VTr{ \underline{ \rm Tr \,}}
\def\cB{{\mathcal B}}
\def\cD{{\mathcal D}}
\def\cE{{\mathcal E}}
\def\cH{{\mathcal H}}
\def\cP{{\mathcal P}}
\def\cS{{\mathcal S}}
\def\cZ{{\mathcal Z}}
\def\fS{{\mathfrak S}}
\newcommand{\sC}{\mathscr{C}}
\newcommand{\bra}{\langle}
\newcommand{\ket}{\rangle}
\newcommand{\BZ}{{\Gamma^\ast}} 
\newcommand{\WS}{\Gamma} 
\newcommand{\RLat}{\mathcal{R}^*} 
\newcommand{\Lat}{\mathcal{R}} 
\newcommand{\per}{\mathrm{per}}
\newcommand{\loc}{\mathrm{loc}}
\newcommand{\gap}{\mathrm{gap}}
\def\sqw{\hbox{\rlap{\leavevmode\raise.3ex\hbox{$\sqcap$}}$%
\sqcup$}}
\def\cqfd{\ifmmode\sqw\else{\ifhmode\unskip\fi\nobreak\hfil
\penalty50\hskip1em\null\nobreak\hfil\sqw
\parfillskip=0pt\finalhyphendemerits=0\endgraf}\fi}
\renewcommand{\eqref}[1]{(\ref{#1})}
\newcommand{\cutoff}{{\mathrm{cutoff}}}
\begin{document}

\title{Convergence rates of supercell calculations in the reduced Hartree-Fock model}

\author{
  David Gontier, Salma Lahbabi
}

\maketitle

\begin{abstract}
This article is concerned with the numerical simulations of perfect crystals. We study the rate of convergence of the reduced Hartree-Fock (rHF) model in a supercell towards the periodic rHF model in the whole space. We prove that, whenever the crystal is an insulator or a semi-conductor, the supercell energy per unit cell converges exponentially fast towards the periodic rHF energy per unit cell, with respect to the size of the supercell.
\end{abstract}


\section{Introduction}

The numerical simulation of the electronic structure of crystals is a very active research area in solid state physics, materials science and nano-electronics. When the crystal is perfect, a good approximation of its electronic ground state density can be obtained by solving a mean-field nonlinear periodic model set on the whole space. 
Using the Bloch transform~\cite[Chapter XIII]{ReedSimon4}, we can recast such a problem as a continuous family of compact problems indexed by points of the Brillouin-zone. In practice, the compact problems are solved on a discretization of the Brillouin-zone.
There is therefore an inherent error coming from the fact that the Brillouin-zone is sampled, and it is not obvious \textit{a priori} whether this error is small, due to the nonlinearity of the problem. It has been observed numerically since the work of Monkhorst and Pack~\cite{Monkhorst1976} that this error is indeed very small when the discretization is uniform, and when the crystal is an insulator or a semiconductor. To our knowledge, no rigorous proof of this fact was ever given. This article aims at proving why it is indeed the case in the reduced Hartree-Fock (rHF) model, which is a Hartree-Fock model in which the exchange term is neglected. This model was studied in~\cite{Cances2008, Catto2001}. \\

A crystal is modeled by a periodic nuclear charge distribution $\mu_\per$. The corresponding rHF energy per unit cell is denoted by $I_\per^{\mu_\per}$. When numerical calculations are performed over a regular discretization of the Brillouin-zone, this amounts to calculate the energy on a \textit{supercell}, \textit{i.e.} on a large box containing $L$ times the periodicity of $\mu_\per$ in each direction (for a total of $L^3$ unit cells in the supercell), and with periodic boundary conditions. The rHF energy on a supercell of size~$L$ is denoted by $I^{\mu_\per}_L$, so that the corresponding energy per unit cell is $L^{-3} I^{\mu_\per}_L$.\\

It was proved in~\cite{Cances2008} that $L^{-3} I^{\mu_\per}_L$ converges to $I_\per^{\mu_\per}$ as $L$ goes to infinity, when the crystal is an insulator or a semiconductor. However, following the proof in~\cite{Cances2008}, we find a rate of convergence of order $L^{-1}$, which is well below what is numerically observed. Our main result is that, if the crystal is an insulator or a semiconductor, then there exist constants $C \in \R^+$ and $\alpha > 0$, such that
\begin{equation} \label{eq:intro_exp_CV}
	\forall L \in \N^*, \quad \left| L^{-3} I^{\mu_\per}_L - I_\per^{\mu_\per} \right| \le C \re^{- \alpha L}.
\end{equation}
We also prove that the supercell electronic density converges exponentially fast to the periodic rHF electronic density, in the $L^\infty(\R^3)$ norm. To prove such rates of convergence, we recast the problem into the difference between an integral and a corresponding Riemann sum, and show that the integrand is both periodic and analytic on a complex strip. Similar tools were used in~\cite{Cloizeaux1964, Cloizeaux1964a, Kohn1959, Brouder2007, Panati2007} to prove that the Wannier functions of insulators are exponentially localized.\\

This article is organized as follows. In Section~\ref{sec:rHF}, we recall how the rHF model is derived, and present the main results. In Section~\ref{sec:Bloch}, we apply the Bloch theory for both periodic models and supercell models. The proofs of the main results are postponed until Section~\ref{sec:proof_expCV}. Finally, we illustrate our theoretical results with numerical simulations in Section~\ref{sec:numerics_rHF}.\\

Throughout this article, we will give explicit values of the constants appearing in the inequalities. These values are very crude, but allows one to see how these constants depend on the parameters of the electronic problem.

%
%
\section{Presentation of the models}
\label{sec:rHF}

A perfect crystal is a periodic arrangement of atoms. Both the nuclear charge density $\mu_\per$ and the electronic density are $\mathcal R$-periodic functions, where $\Lat$ is a discrete periodic lattice of $\R^3$. Let $\WS$ be the Wigner-Seitz cell of the lattice, and let $\BZ$ be the Wigner-Seitz cell of the dual lattice $\RLat$ (one can also take $\BZ$ as the first Brillouin-zone). For instance, for $\Lat= a \ZZ^3$, $\Gamma= [-a/2,a/2)^3$, $\RLat= (2\pi/a) \ZZ^3$ and $\Gamma^\ast=[-\pi/a,\pi/a)^3$. For $\bR \in \Lat$, we let $\tau_\bR$ be the translation operator on~$L^2(\R^3)$ defined by $(\tau_\bR f)( \bx ) := f(\bx - \bR)$. \\

We will assume throughout the paper that the nuclear charge density $\mu_\per$ is in $L^2_\per(\WS)$ for simplicity, but distributions with singularity points may also be handled~\cite{Blanc2003}.


\subsection{The supercell rHF model}

In a supercell model, the system is confined to a box $\WS_L := L \WS$ with periodic boundary conditions. We denote by $L^2_\per(\WS_L)$ the Hilbert space of locally square integrable functions that are $L\Lat$-periodic. The Fourier coefficients of a function $f \in L^2_\per ( \WS_L)$ are defined by
\[
	\forall \bk \in L^{-1} \RLat , \quad c_\bk^L (f) =\frac{1}{| \Gamma_L|} \int_{\Gamma_L} f(\bx) \re^{-\ri \bk \cdot \bx} \rd \bx,
\]
so that, for any $f \in L^2_\per(\Gamma_L)$,
\[
	f(\bx) = \sum_{\bk \in L^{-1} \RLat} c_\bk^L(f) \re^{\ri \bk \cdot \bx} \quad \text{ a.e. and in } L^2_\per(\WS_L).
\]

The set of admissible electronic states for the supercell model is
\[
	\cP_L := \left\{ \gamma_L \in \cS(L^2_\per(\WS_L)), \ 0 \le \gamma_L \le 1,  \  \Tr_{L^2_\per(\WS_L)} \left(  \gamma_L \right) +\Tr_{L^2_\per(\WS_L)}\left(  -\Delta_L \gamma_L \right) < \infty \right\},
\]
where $\cS(\cH)$ denotes the space of the bounded self-adjoint operators on the Hilbert space $\cH$. Here, $\Tr_{L^2_\per(\WS_L)}\left(  -\Delta_L \gamma_L \right)$ is a shorthand notation for 
\begin{equation} \label{eq:def:Pj}
	\Tr_{L^2_\per(\WS_L)}\left( - \Delta_L \gamma_L \right):=\sum_{i=1}^3\Tr_{L^2_\per(\WS_L)}\left(  P_{j,L} \gamma_L P_{j,L} \right),
\end{equation}
where, for $1 \le j \le 3$,  $P_{j,L}$ is the self-adjoint operator on $L^2_\per (\Gamma_L)$ defined by $c_\bk^L(P_{j,L} f)= k_j c_\bk^L(f)$ for all $\bk = (k_1, k_2, k_3) \in L^{-1} \RLat$. Note that $c_\bk^L(-\Delta_L f)= | \bk |^2 c_\bk^L(f)$ for all $\bk \in L^{-1} \RLat $. 

We introduce the $L \Lat$-periodic Green kernel $G_L$ of the Poisson interaction~\cite{Lieb1977}, solution of
\[
	\left\{ \begin{array}{c}
		-\Delta G_L =  4 \pi \left( \sum_{\bk \in \Lat} \delta_\bk - 1 \right) \\
		G_L \text{ is } L\Lat\text{-periodic.}
		\end{array} \right. 
\]
The expression of $G_L$ is given in the Fourier basis by
\begin{equation} \label{eq:GLFourier}
	G_L (\bx) = c_L + \frac{4 \pi}{| \WS_L|} \sum_{\bk \in L^{-1} \RLat \setminus \{ \bnull \}} \dfrac{\re^{\ri \bk \cdot \bx}}{| \bk |^2},
\end{equation}
where $c_L = | \WS_L|^{-1} \int_{\WS_L} G_L$. The constant $c_L$ can be any fixed constant \textit{a priori}. In one of the first article on the topic~\cite{Lieb1977}, the authors chose to set $c_L = 0$, but other choices are equally valid (see~\cite{Cances2008} for instance). This is due to the fact that $c_L$ does not play any role for neutral systems.  We choose to set $c_L = 0$ for simplicity. 
The supercell Coulomb energy is defined by
\begin{equation} \label{eq:def_D1}
	\forall f, g \in L^2_\per(\WS_L), \quad D_L (f,g) := \int_\WS (f\ast_{\Gamma_L} G_L) (\bx) g(\bx) \rd \bx,
\end{equation}
where $(f\ast_{\Gamma_L} G_L) (\bx):= \int_\WS f(\by)G_L(\bx-\by) \rd \by$. We recall that the map $\rho \mapsto \rho \ast_{\Gamma_L} G_L$ is continuous from $L^2_\per(\WS_L)$ to $L^\infty_\per(\WS_L)$.\\

Any $\gamma_L \in \cP_L$ is locally trace-class, and can be associated an $L \Lat$-periodic density $\rho_{\gamma_L} \in L^2_\per(\WS_L)$.
For $\gamma_L \in \cP_L$, the supercell reduced Hartree-Fock energy is
\begin{equation} \label{eq:EL}
	\cE_L^{\mu_\per}(\gamma_L) :=  \dfrac12  \Tr_{L^2_\per(\WS_L)} (- \Delta_L \gamma_L) + \dfrac12 D_L(\rho_{\gamma_L} - \mu_\per, \rho_{\gamma_L} - \mu_\per).
\end{equation}
The first term of~\eqref{eq:EL} corresponds to the supercell kinetic energy, and the second term represents the supercell Coulomb energy.
The ground state energy of the system is given by the minimization problem
\begin{equation} \label{eq:finite_problem}
	I^{\mu_\per}_L = \inf \left\{ \cE^{\mu_\per}_L (\gamma_L) , \ \gamma_L \in \cP_L, \  \int_{\WS_L} \rho_{\gamma_L} = \int_{\WS_L} \mu_\per   \right\}.
\end{equation}
Using techniques similar to~\cite[Theorem 4]{Cances2008}, the following result holds (we do not prove it, for the arguments are similar to the ones in~\cite{Cances2008}).
\begin{theorem}[Existence of a supercell minimizer] \label{th:existence_supercell}
For all $L \in \N^*$, the minimization problem~\eqref{eq:finite_problem} admits minimizers. One of these minimizers $\gamma_{L,0}$ satisfies $\tau_\bR \gamma_{L,0} = \gamma_{L,0} \tau_\bR$. All minimizers share the same density $\rho_{\gamma_{L,0}}$, which is $\Lat$-periodic.
 Finally, $\gamma_{L,0}$ satisfies the self-consistent equation
\begin{equation} \label{eq:sc_supercell}
	\left\{ \begin{array}{ll}
		\gamma_{L,0} & = \mathds{1} \left( H_{L,0} < \varepsilon_F^L \right) + \delta \\
		H_{L,0} & = - \frac12 \Delta_L + V_{L,0} \\
		V_{L,0} & = \left( \rho_{\gamma_{L,0}} - \mu_\per \right) \ast_\WS G_1.
	\end{array} \right.
\end{equation}
where $H_{L,0}$ acts on $L^2_\per(\WS_L)$ and $0 \le \delta \le \mathds{1}(H_{L,0} = \varepsilon_F^L)$ is a finite rank operator.
\end{theorem}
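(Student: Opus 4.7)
The plan is to proceed by the direct method of the calculus of variations, closely following the strategy of~\cite[Theorem~4]{Cances2008}; the key simplification over that reference is that $\Gamma_L$ is bounded, so the Sobolev embedding $H^1_\per(\Gamma_L)\hookrightarrow L^2_\per(\Gamma_L)$ is compact and no concentration-compactness is required. To begin, I would check that $\cE_L^{\mu_\per}$ is bounded below and coercive on the admissible set. The kinetic term is non-negative; with $c_L=0$ the bilinear form $D_L(\cdot,\cdot)$ is positive semi-definite on densities with zero average over $\Gamma_L$, and the charge constraint forces $\int_{\Gamma_L}(\rho_{\gamma_L}-\mu_\per)=0$, so the Coulomb term is non-negative as well. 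Along any minimizing sequence $(\gamma_{L,n})_n$, the kinetic trace and the trace $\Tr(\gamma_{L,n})=\int\mu_\per$ are uniformly bounded, giving boundedness in the natural trace-class-plus-$H^1$ topology; by compactness of the Sobolev embedding, a subsequence converges to some $\gamma_{L,0}\in\cP_L$ with $\rho_{\gamma_{L,n}}\to\rho_{\gamma_{L,0}}$ strongly in $L^2_\per(\Gamma_L)$. Lower semi-continuity of the kinetic term together with continuity of $D_L$ in the density then identify $\gamma_{L,0}$ as a minimizer.

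Next, to establish uniqueness of the density and the existence of a $\Lat$-covariant minimizer, I would exploit the convexity structure. Since the kinetic term is linear in $\gamma_L$ while the map $\rho\mapsto D_L(\rho-\mu_\per,\rho-\mu_\per)$ is strictly convex on the charge-constrained affine set, the energy is a strictly convex function of the density, so all minimizers share a single density $\rho_0$. For each $\bR\in\Lat$, the conjugation $\gamma_L\mapsto\tau_\bR\gamma_L\tau_\bR^{-1}$ preserves $\cP_L$, commutes with $\Delta_L$, and preserves $\mu_\per$; hence it leaves $\cE_L^{\mu_\per}$ invariant, and applying it to any minimizer and invoking uniqueness of $\rho_0$ shows that $\rho_0$ is $\Lat$-periodic. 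To obtain an actual symmetric minimizer, I would average:
\[
	\gamma_{L,0}:=\frac{1}{L^3}\sum_{\bR\in\Lat/L\Lat}\tau_\bR\widetilde\gamma_L\tau_\bR^{-1},
\]
where $\widetilde\gamma_L$ is any minimizer. By convexity of $\cP_L$ and of the energy, $\gamma_{L,0}$ is still a minimizer, and by construction $\tau_\bR\gamma_{L,0}=\gamma_{L,0}\tau_\bR$ for every $\bR\in\Lat$.

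Finally, I would derive the self-consistent equation~\eqref{eq:sc_supercell}. Computing the Gâteaux derivative of $\cE_L^{\mu_\per}$ at $\gamma_{L,0}$ along admissible directions in $\cP_L$ yields the mean-field Hamiltonian $H_{L,0}=-\tfrac12\Delta_L+V_{L,0}$ with $V_{L,0}=(\rho_{\gamma_{L,0}}-\mu_\per)\ast_{\Gamma_L} G_L$; a short Fourier calculation from~\eqref{eq:GLFourier}, using that $\rho_{\gamma_{L,0}}-\mu_\per$ is $\Lat$-periodic and has zero average over $\Gamma$, reduces this to the unit-cell convolution $(\rho_{\gamma_{L,0}}-\mu_\per)\ast_\WS G_1$ in the statement. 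Minimising the linearised functional $\gamma_L\mapsto\Tr(H_{L,0}\gamma_L)$ over $\cP_L$ under the single affine constraint $\int\rho_{\gamma_L}=\int\mu_\per$ is a standard KKT problem; since $H_{L,0}$ has compact resolvent and therefore purely discrete spectrum on the bounded domain, the solutions are exactly of the form $\mathds{1}(H_{L,0}<\varepsilon_F^L)+\delta$ with Lagrange multiplier $\varepsilon_F^L$ and a finite-rank partial filling $\delta$ at the Fermi level. The one place where a genuine computation is unavoidable is this Fourier identification of the supercell Coulomb potential with its unit-cell counterpart on a neutral $\Lat$-periodic density, in particular the disappearance of the constant $c_L-c_1$; the remaining steps are essentially a compact-domain transcription of~\cite{Cances2008}.
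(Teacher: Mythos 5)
The paper does not actually prove Theorem~\ref{th:existence_supercell}: it states only that ``the arguments are similar to the ones in~\cite{Cances2008}'' and refers to \cite[Theorem~4]{Cances2008}, which is precisely the argument you reconstruct (direct method with the compact Sobolev embedding on the bounded supercell replacing concentration-compactness, strict convexity of $D_L$ on the neutral affine set for uniqueness of the density, averaging over $\Lat/L\Lat$ for the commuting minimizer, and the KKT analysis plus the Fourier identification of $\ast_{\Gamma_L}G_L$ with $\ast_\WS G_1$ on $\Lat$-periodic neutral densities). Your sketch is therefore a correct transcription of the route the paper points to, and I see no gap in it.
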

Here, $\varepsilon_F^L$ is the Fermi level of the supercell model. It is chosen so that the charge constraint in~\eqref{eq:finite_problem} is satisfied.

\begin{remark}
	The $L \Lat$-periodic density of the minimizers $\rho_{\gamma_{L,0}}$ is actually $\Lat$-periodic. It is unclear that such a property should hold for more complex models (\textit{e.g.} Kohn-Sham models). This is the reason why we state our results for the rHF model. We believe however that similar results should hold true for more complex systems, provided that the supercell density is $\Lat$-periodic for each size of the supercell.
\end{remark}

%
%

\subsection{The reduced Hartree-Fock model for perfect crystals}

The rHF model for perfect crystals, or periodic rHF, has been rigorously derived from the rHF model for finite molecular systems by means of a thermodynamic limit procedure by Catto, Le Bris and Lions~\cite{Catto2001}. In~\cite{Cances2008}, Cancès, Deleurence and Lewin proved that the same periodic rHF model is the limit of the rHF supercell model as the size of the supercell goes to infinity. \\

We introduce the set of admissible density matrices
\begin{equation} \label{eq:VTr1}
	\cP_{\per} := \left\{ \gamma \in \cS(L^2_\per(\Gamma)), \ 0 \le \gamma \le 1, \ \forall \bR \in \Lat, \ \tau_\bR \gamma = \gamma \tau_\bR , \ \VTr \left( \gamma \right) +\VTr\left( -\Delta\gamma \right)< \infty \right\},
\end{equation}
where $\VTr$ denotes the trace per unit volume. For any locally trace class operator $A$ that commutes with $\Lat$-translations, it reads
\begin{equation} \label{eq:VTr_def1}
\VTr\left(  A \right):= \lim_{L \to \infty} \dfrac{1}{L^3} \Tr\left( \1_{L\Gamma} A \1_{L\Gamma} \right).
\end{equation}
The trace per unit volume $\VTr$ can also be defined via the Bloch transform (see Equation~\eqref{eq:VTr} below). 
Here, $\VTr\left( -\Delta\gamma \right)$ is a shorthand notation for 
$$
	\VTr\left( -\Delta\gamma \right) := \sum_{j=1}^3\VTr\left( P_j\gamma P_j \right),
$$
where $P_j=-\ri\partial_{x_j}$ is the momentum operator in the $j^{\rm th}$ direction. The Coulomb energy per unit volume is defined by
\begin{equation} \label{eq:def_D1}
	\forall f, g \in L^2_\per(\WS), \quad D_1 (f,g) := \int_\WS (f\ast_\Gamma G_1) (\bx) g(\bx) \rd \bx,
\end{equation}
where $G_1$ was introduced in~\eqref{eq:GLFourier}.\\

Any $\gamma \in \cP_\per$ is locally trace-class, and can be associated an $\Lat$-periodic density $\rho_{\gamma} \in L^2_\per(\WS)$. For $\gamma \in \cP_\per$, the reduced Hartree-Fock energy is given by
\begin{equation} \label{eq:cE_per}
	\cE_\per^{\mu_\per}(\gamma) := \frac12 \VTr \left( -\Delta \gamma \right) + \dfrac12 D_1 \left( \rho_\gamma - \mu_\per, \rho_\gamma - \mu_\per \right).
\end{equation}
 The first term of~\eqref{eq:cE_per} corresponds to the kinetic energy per unit volume, and the second term represents the Coulomb energy per unit volume.
Finally, the periodic rHF ground state energy is given by the minimization problem
\begin{equation} \label{eq:Iper}
	I_\per^{\mu_\per} := \inf \left\{ \cE_\per^{\mu_\per}(\gamma), \ \gamma \in \cP_\per, \ \int_\WS \rho_\gamma = \int_\WS \mu_\per \right\}.
\end{equation}
It has been proved in \cite{Cances2008} that the minimization problem (\ref{eq:Iper}) admits a unique minimizer $\gamma_0$, which is the solution of the self-consistent equation
\begin{equation} \label{eq:sc}
	\left\{ \begin{array}{lll}
		\gamma_0 & = & \mathds{1}(H_0 < \varepsilon_F) + \delta \\
		H_{0} & = & - \frac12 \Delta + V_0 \\
		V_0 & = & (\rho_{\gamma_0} - \mu_\per) \ast_\WS G_1,
		\end{array} \right.
\end{equation}
where $H_0$ acts on $L^2(\R^3)$ and $0 \le \delta \le \mathds{1}(H_0 = \varepsilon_F)$ is a finite rank operator.
Here, the Fermi energy $\varepsilon_F$ is the Lagrange multiplier corresponding to the charge constraint $\int_\WS \rho_{\gamma_0} = \int_\WS \mu_\per$. We make the following assumption:\\
\[
	\boxed{ \text{\textbf{(A1)} The system is an insulator, in the sense that $H_0$ has a spectral gap around $\varepsilon_F$.}}
\]
In particular, $\delta = 0$.


\section{Main results}

Our main results are concerned with the rate of convergence of supercell models towards corresponding periodic models. We first prove the exponential rate of convergence in a linear setting, where the mean-filed potential $V$ is a fixed $\Lat$-periodic function: $V \in L^\infty_\per(\WS)$. We then extend our result to the nonlinear rHF model, where the external potential is the solution of the  self-consistent equation~\eqref{eq:sc_supercell} or~\eqref{eq:sc}.

We start with the linear case. The proof of the following proposition is given in Section~\ref{subsec:proof_exp_lin}.

\begin{proposition}[Convergence rate of the linear supercell model] \label{prop:expCV}
Let $V\in L^\infty_\per(\WS)$ be such that the operator $H=-\frac12 \Delta +V$ acting on $L^2(\R^3)$ has a gap of size $g > 0$ centered around the Fermi level $\varepsilon_F$. Then, for any $L \in \N^*$, the operator $H^L := - \frac12 \Delta_L + V$ acting on $L^2_\per(\WS_L)$ has a gap of size at least $g$ around $\varepsilon_F$. Let 
\begin{equation} \label{eq:gamma_gammaL}
	\gamma=\1\left(H \leq \varepsilon_F\right) \quad \text{and}\quad \gamma_L =\1\left(H^L \leq \varepsilon_F\right).
\end{equation}
	Then, $\gamma \in \cP_\per$ and $\gamma_L \in \cP_L$, and there exist constants $C \in \R^+$ and $\alpha >0$, that depend on the lattice $\Lat$, $\| V \|_{L^\infty}$, $g$ and $\varepsilon_F$ only, such that
	\begin{equation} \label{eq:conv_energy_lin}
		\forall L \in \N^*, \quad \left| \VTr \left( \gamma H \right) - \VTr_L \left( \gamma_L H^L \right) \right| \le C \re^{ - \alpha L} \quad \text{(ground state energy per unit volume)}
	\end{equation}
	and
	\begin{equation} \label{eq:conv_density}
		\forall L \in \N^*, \quad \left\| \rho_{\gamma} - \rho_{\gamma_L} \right\|_{L^\infty} \le C \re^{ - \alpha L} \quad \text{(ground state density)}.
	\end{equation}
\end{proposition}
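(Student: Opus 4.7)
The plan is to use the Bloch decomposition to rewrite both $\VTr(\gamma H)$ and $L^{-3}\Tr(\gamma_L H^L)$ as, respectively, an integral over $\BZ$ and the associated uniform Riemann sum of the same $\RLat$-periodic fiber function
\[
	f(\bk) := \Tr_{L^2_\per(\WS)}(\gamma_\bk H_\bk), \qquad H_\bk := \frac12(-\ri\nabla + \bk)^2 + V, \qquad \gamma_\bk := \1(H_\bk \le \varepsilon_F),
\]
acting on $L^2_\per(\WS)$. Since $\sigma(H) = \bigcup_{\bk \in \BZ}\sigma(H_\bk)$ and $\sigma(H^L) = \bigcup_{\bk \in L^{-1}\RLat \cap \BZ}\sigma(H_\bk) \subset \sigma(H)$, the gap at $\varepsilon_F$ persists for each fiber $H_\bk$ and for $H^L$. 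Writing
\[
	f(\bk) = \frac{1}{2\pi\ri}\oint_\cC z\, \Tr_{L^2_\per(\WS)}\bigl((z - H_\bk)^{-1}\bigr)\, \rd z
\]
for a fixed bounded positively-oriented rectangular contour $\cC \subset \C$ enclosing the portion of $\sigma(H)$ below $\varepsilon_F$ and crossing the real axis strictly inside the spectral gap, the task reduces to bounding the quadrature error for $f$.

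The next step is analytic continuation of $f$ into complex momenta. Writing $\bk = \bk_1 + \ri\, \bk_2$ with $\bk_1,\bk_2 \in \R^3$, a direct computation gives
\[
	H_\bk = H_{\bk_1} + \bk_2 \cdot \nabla + \ri\, \bk_1 \cdot \bk_2 - \frac12 |\bk_2|^2,
\]
and the first-order part $\bk_2 \cdot \nabla$ is infinitesimally $H_{\bk_1}$-bounded. A Neumann-series argument based on the resolvent identity
\[
	(z - H_\bk)^{-1} = (z - H_{\bk_1})^{-1}\sum_{n \ge 0}\bigl((H_\bk - H_{\bk_1})(z-H_{\bk_1})^{-1}\bigr)^n,
\]
combined with the uniform bound on $(z - H_{\bk_1})^{-1}$ for $z \in \cC$ and $\bk_1 \in \R^3$ afforded by the gap $g$, produces some $\eta > 0$ depending only on $\Lat$, $\|V\|_{L^\infty}$, $g$ and $\varepsilon_F$ such that for every $\bk$ with $|\Im \bk| < \eta$ and every $z \in \cC$, the resolvent $(z - H_\bk)^{-1}$ exists as a trace-class operator on $L^2_\per(\WS)$ with uniformly bounded trace norm. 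Consequently $f$ extends holomorphically and $\RLat$-periodically to the strip $\{|\Im \bk| < \eta\}$, with a uniform sup-norm bound there.

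Expanding $f(\bk) = \sum_{\br \in \Lat} \hat f_\br \re^{\ri \br \cdot \bk}$ in Fourier series, analyticity on the strip yields $|\hat f_\br| \le C \re^{-\eta |\br|}$. Using the orthogonality identity
\[
	\frac{1}{L^3} \sum_{\bk \in L^{-1}\RLat \cap \BZ} \re^{\ri \br \cdot \bk} = \1(\br \in L\Lat),
\]
the Riemann sum equals $\sum_{\br \in L\Lat} \hat f_\br$ while the integral equals $\hat f_0$, so their difference is bounded by $\sum_{\br \in L\Lat \setminus \{\bnull\}} C \re^{-\eta |\br|} \le C' \re^{-\alpha L}$ for some $\alpha > 0$, which proves~\eqref{eq:conv_energy_lin}. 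For~\eqref{eq:conv_density} the same scheme applies pointwise: setting $f_\bx(\bk) := \frac{1}{2\pi\ri}\oint_\cC [(z-H_\bk)^{-1}](\bx,\bx)\, \rd z = \rho_{\gamma_\bk}(\bx)$, elliptic regularity together with the Sobolev embedding $H^s_\per \hookrightarrow L^\infty_\per$ with $s > 3/2$ applied fiberwise bounds the resolvent kernel on the diagonal uniformly in $\bx$ and in $\bk$ on the strip, and the same Fourier--Riemann argument then yields the claimed $L^\infty$ estimate.

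The main technical obstacle lies in the second paragraph: one must track the width $\eta$ of the analyticity strip and the trace-norm bound on $(z - H_\bk)^{-1}$ quantitatively and uniformly in $z \in \cC$ and in $\bk$ on the strip, with explicit dependence only on $\Lat$, $\|V\|_{L^\infty}$, $g$ and $\varepsilon_F$. Once this quantitative analyticity is secured, the final step is essentially the classical Paley--Wiener exponential convergence of the trapezoidal rule for periodic analytic functions.
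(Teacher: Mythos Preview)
Your overall strategy---Bloch decomposition, Cauchy integral representation, analytic continuation to a complex strip in the quasi-momentum, and the Paley--Wiener/trapezoidal-rule argument for Riemann sums---is exactly the one the paper uses. The gap lies in the trace-class claims.

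In dimension three, the fiber resolvent $(z-H_\bk)^{-1}$ on $L^2_\per(\WS)$ is \emph{not} trace-class: the eigenvalues of $H_\bk$ grow like $|\bm|^2$ for $\bm\in\RLat$, so the singular values of the resolvent decay like $|\bm|^{-2}$, and $\sum_{\bm\in\RLat\setminus\{0\}}|\bm|^{-2}$ diverges. Hence the expression $\Tr_{L^2_\per(\WS)}\bigl((z-H_\bk)^{-1}\bigr)$ that you write under the contour integral is undefined, and the statement that ``the resolvent $(z-H_\bk)^{-1}$ exists as a trace-class operator with uniformly bounded trace norm'' is false. The same issue reappears in your treatment of the density: the Schwartz kernel of $(z-H_\bk)^{-1}$ has a Coulomb-type singularity on the diagonal in three dimensions, so $[(z-H_\bk)^{-1}](\bx,\bx)$ is not a well-defined pointwise quantity for fixed $z\in\cC$.

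The paper circumvents this by never taking the trace of a bare resolvent. For the energy it treats the kinetic part via functions of the form $\Tr\bigl((P_j+q_j)\gamma_\bq(P_j+q_j)\bigr)$ and exploits $\gamma_\bz=\gamma_\bz\gamma_\bz\gamma_\bz$ to insert the genuinely trace-class factor $(1-\Delta_1)^{-2}$ between bounded operators $\widetilde B_1(\bz)=(1-\Delta_1)\gamma_\bz$ and $\widetilde B_2(\bz)=\gamma_\bz(1-\Delta_1)$. For the density it argues by duality: for $W\in L^1_\per(\WS)$ one studies $M_W(\bz)=\Tr(\gamma_\bz W)=\Tr\bigl(\widetilde B_2(\bz)(1-\Delta_1)^{-1}W(1-\Delta_1)^{-1}\widetilde B_1(\bz)\bigr)$ and uses the Kato--Seiler--Simon inequality to bound $\|(1-\Delta_1)^{-1}W(1-\Delta_1)^{-1}\|_{\fS_1}\lesssim\|W\|_{L^1}$, yielding analyticity and the $L^\infty$ bound on $\rho_\gamma-\rho_{\gamma_L}$ without ever touching the resolvent kernel on the diagonal. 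Your Neumann-series control of the strip width is fine once you restrict $\bk_1$ to a fundamental domain (the bounds on $(1-\Delta_1)(z-H_\bq)^{-1}$ are not uniform on all of $\R^3$, as the paper also remarks), but the trace-class bookkeeping needs to be redone along these lines.
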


In a second step, we will use the projectors $\gamma$ and $\gamma_L$ obtained for well chosen potentials $V$ as candidates for the minimization problems~\eqref{eq:Iper} and~\eqref{eq:finite_problem} respectively. We have the following result (see Section~\ref{sssec:proof_cor} for the proof).

\begin{corollary} \label{cor:expCV}
	With the same notation as in Proposition~\ref{prop:expCV}, there exist constants $C \in \R^+$ and $\alpha >0$, that depend on the lattice $\Lat$, $\| V \|_{L^\infty}$, $g$ and $\varepsilon_F$ only, such that
	\begin{equation} \label{eq:conv_energy}
		\forall L \in \N^*, \quad \left| \cE_\per^{\mu_\per}( \gamma) - L^{-3} \cE_L^{\mu_\per}(\gamma_L) \right| \le C \re^{- \alpha L}.
	\end{equation}
\end{corollary}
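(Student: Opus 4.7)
The idea is to split $\cE_\per^{\mu_\per}(\gamma) - L^{-3}\cE_L^{\mu_\per}(\gamma_L)$ into a kinetic and a Coulomb contribution, and to control each one using the two exponential bounds furnished by Proposition~\ref{prop:expCV}. A key preliminary observation is that, since $V$ is $\Lat$-periodic, the supercell Hamiltonian $H^L = -\tfrac12 \Delta_L + V$ commutes not only with $L\Lat$-translations but with all $\Lat$-translations; hence so does $\gamma_L$, and its density $\rho_{\gamma_L}$ is in fact $\Lat$-periodic. Consequently, for every $\Lat$-periodic function $W$ one has $L^{-3}\int_{\WS_L} W \rho_{\gamma_L} = \int_\WS W \rho_{\gamma_L}$, so the supercell-versus-periodic comparison reduces to a comparison between two $\Lat$-periodic objects.

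For the kinetic part, I would use the identities
\[
\tfrac{1}{2}\VTr(-\Delta\gamma) = \VTr(\gamma H) - \int_\WS V\rho_\gamma, \qquad
\tfrac{1}{2}L^{-3}\Tr_{L^2_\per(\WS_L)}(-\Delta_L\gamma_L) = L^{-3}\Tr_{L^2_\per(\WS_L)}(\gamma_L H^L) - \int_\WS V\rho_{\gamma_L},
\]
and subtract. The difference of trace-per-unit-volume terms is exponentially small by~\eqref{eq:conv_energy_lin}, while the remainder $\int_\WS V(\rho_{\gamma_L}-\rho_\gamma)$ is bounded by $\|V\|_{L^\infty}|\WS|\,\|\rho_{\gamma_L}-\rho_\gamma\|_{L^\infty}$, which is exponentially small by~\eqref{eq:conv_density}.

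For the Coulomb part, I would first establish the scaling identity
\[
L^{-3} D_L(f,g) = D_1(f,g) \qquad \text{for all } \Lat\text{-periodic } f,g,
\]
which follows by inserting the Fourier expansion~\eqref{eq:GLFourier} with $c_L=0$ into the definition of $D_L$: only the modes $\bk \in \RLat$ survive when $f,g$ are $\Lat$-periodic, and the prefactor $|\WS_L| = L^3|\WS|$ precisely cancels the outer $L^{-3}$. Applying this to $\rho_{\gamma_L} - \mu_\per$ reduces the Coulomb difference to $\tfrac12 D_1(\rho_\gamma - \rho_{\gamma_L},\, \rho_\gamma + \rho_{\gamma_L} - 2\mu_\per)$, which I would bound via the Cauchy--Schwarz estimate $|D_1(f,g)| \le C\|f\|_{L^2_\per(\WS)}\|g\|_{L^2_\per(\WS)}$ (readily derived from the Fourier representation) together with $\|\rho_\gamma - \rho_{\gamma_L}\|_{L^2_\per} \le |\WS|^{1/2}\|\rho_\gamma - \rho_{\gamma_L}\|_{L^\infty} \le C\re^{-\alpha L}$.

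The one technical point to watch is to control $\|\rho_\gamma + \rho_{\gamma_L} - 2\mu_\per\|_{L^2_\per(\WS)}$ by a constant independent of $L$. This should follow from a priori $L^\infty$ (hence $L^2_\per$) bounds on the density of the spectral projector below a fixed energy of a Schrödinger operator with bounded periodic potential (these bounds depending only on the data $\Lat$, $\|V\|_{L^\infty}$, $g$, $\varepsilon_F$), combined with the uniform closeness $\|\rho_{\gamma_L}-\rho_\gamma\|_{L^\infty} \le C\re^{-\alpha L}$ from Proposition~\ref{prop:expCV} and the assumption $\mu_\per \in L^2_\per(\WS)$. Note that no charge-neutrality hypothesis is needed here, because the convention $c_L=0$ renders $D_L$ and $D_1$ well-defined on arbitrary $\Lat$-periodic densities.
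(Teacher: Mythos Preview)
Your proposal is correct and follows essentially the same route as the paper: split into kinetic and Coulomb parts, reduce the supercell Coulomb term to $D_1$ via the $\Lat$-periodicity of $\rho_{\gamma_L}$ and the scaling identity $L^{-3}D_L=D_1$, factor the Coulomb difference as $D_1(\rho_\gamma-\rho_{\gamma_L},\rho_\gamma+\rho_{\gamma_L}-2\mu_\per)$, and bound it by a Cauchy--Schwarz-type estimate on $D_1$ together with~\eqref{eq:conv_density}. The only cosmetic difference is that for the kinetic term the paper invokes the intermediate kinetic-energy lemma (Lemma~\ref{lem:conv_kinetic}) directly, whereas you recover the same bound from the two conclusions~\eqref{eq:conv_energy_lin} and~\eqref{eq:conv_density} of Proposition~\ref{prop:expCV}; this is a harmless reordering, since in the paper~\eqref{eq:conv_energy_lin} is itself obtained from Lemma~\ref{lem:conv_kinetic} plus~\eqref{eq:conv_density}.
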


 
We are now able to state our main result for the rHF model. The proof of the following theorem is given in Section~\ref{subsec:proof_exp}. In the sequel, we denote by $\cB(E)$ the set of bounded operators acting on the Banach space $E$.

\begin{theorem}[Convergence rate of the rHF supercell model] \label{th:Deleurence_exp} $\,$
Under hypothesis ({\bf A1}), there exist $C \in \R^+$ and $\alpha  > 0$ independent of $L$ such that the following estimates hold true:
\begin{itemize}
	\item convergence of the ground state energy per unit volume:
	\[
		\forall L \in \N^*, \quad | L^{-3} I_L^{\mu_\per} - I_\per^{\mu_\per} | \le C \re^{- \alpha L} ;
	\]
	\item convergence of the ground state density:
	\[
		\forall L \in \N^*, \quad \| \rho_{\gamma_{L,0}} - \rho_{\gamma_0} \|_{L^\infty_\per(\WS)} \le C \re^{- \alpha L} ;
	\]
	\item convergence of the mean-field Hamiltonian:
	\[
		\forall L \in \N^*, \quad \| H_L - H_0 \|_{\cB(L^2(\R^3))}  \le C \re^{- \alpha L},
	\]
\end{itemize}
where $H_L := - \frac12 \Delta + \left( \rho_{\gamma_{L,0}} - \mu_\per \right) \ast_\WS G_1$ and $H := - \frac12 \Delta + \left( \rho_{\gamma_{0}} - \mu_\per \right) \ast_\WS G_1$ are acting on $L^2(\R^3)$.
\end{theorem}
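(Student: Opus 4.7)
The plan is to lift the exponential convergence of the linear problem (Proposition~\ref{prop:expCV} and Corollary~\ref{cor:expCV}) to the nonlinear rHF self-consistent solutions by combining strict convexity of $\mathcal{E}_L^{\mu_\per}$ with a Cauchy contour representation of the spectral projectors.

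The starting point is to apply Proposition~\ref{prop:expCV} with $V := V_0$, the self-consistent periodic potential from~\eqref{eq:sc}. Assumption (A1) guarantees that $H_0$ has a gap of size $g>0$ around $\varepsilon_F$, so the proposition applies: the supercell operator $H_0^L := -\frac12 \Delta_L + V_0$ inherits the same gap around $\varepsilon_F$, and $\tilde\gamma_L := \mathds{1}(H_0^L \le \varepsilon_F)$ is well-defined. A Bloch decomposition shows that the rank per unit cell of $\tilde\gamma_L$ equals the rank per unit cell of $\gamma_0$ (the number of bands below $\varepsilon_F$ is constant in $\bk$, since $\varepsilon_F$ sits in the gap), whence $\int_{\Gamma_L} \rho_{\tilde\gamma_L} = \int_{\Gamma_L} \mu_\per$ and $\tilde\gamma_L$ is an admissible trial state for~\eqref{eq:finite_problem}. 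Corollary~\ref{cor:expCV} then yields $L^{-3} I_L^{\mu_\per} \le L^{-3} \mathcal{E}_L^{\mu_\per}(\tilde\gamma_L) \le I_\per^{\mu_\per} + C\re^{-\alpha L}$.

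To obtain a first matching estimate, I would use strict convexity of the rHF functional, via the identity
\begin{equation*}
\mathcal{E}_L^{\mu_\per}(\tilde\gamma_L) - \mathcal{E}_L^{\mu_\per}(\gamma_{L,0}) = \Tr_L\bigl(H_{L,0}(\tilde\gamma_L - \gamma_{L,0})\bigr) + \tfrac12 D_L\bigl(\rho_{\tilde\gamma_L} - \rho_{\gamma_{L,0}},\, \rho_{\tilde\gamma_L} - \rho_{\gamma_{L,0}}\bigr),
\end{equation*}
whose first-order term is non-negative because $\gamma_{L,0}$ minimizes $\Tr_L(H_{L,0}\cdot)$ over admissible states with the correct charge. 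Both densities are $\Lat$-periodic (by Theorem~\ref{th:existence_supercell} and because $H_0^L$ commutes with $\Lat$-translations), so $L^{-3}D_L$ reduces to $D_1$ on $\WS$. Combining this with the previous upper bound produces an exponentially small $D_1$-norm bound on $\rho_{\gamma_{L,0}} - \rho_{\tilde\gamma_L}$, hence on $\rho_{\gamma_{L,0}} - \rho_{\gamma_0}$ after invoking the $L^\infty$ estimate of Proposition~\ref{prop:expCV}.

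The main obstacle is upgrading this weak Coulomb-norm density control to the $L^\infty$ density bound and the operator-norm Hamiltonian bound stated in the theorem. For this I would use a Cauchy contour representation: once $\|V_{L,0}-V_0\|_{L^\infty}$ is controlled (via the smoothing properties of $\ast_\WS G_1$ on neutral densities), a fixed contour $\mathcal{C}$ of length $\mathcal{O}(g)$ enclosing the filled part of $\sigma(H_0)$ also encloses that of $\sigma(H_{L,0})$, at distance $\gtrsim g/2$ from both, and the resolvent identity gives
\begin{equation*}
\gamma_{L,0} - \tilde\gamma_L = \frac{1}{2\pi \ri} \oint_{\mathcal{C}} (z - H_{L,0})^{-1}(V_0 - V_{L,0})(z - H_0^L)^{-1}\, \rd z.
\end{equation*}
Extracting densities turns this into a closed self-consistent equation $(\mathrm{Id} + \mathcal{K}_L)(\rho_{\gamma_{L,0}} - \rho_{\gamma_0}) = \varepsilon_L$ in a suitable $L^\infty_\per(\WS)$-type norm with $\|\varepsilon_L\|_{L^\infty} \le C\re^{-\alpha L}$, whose $L$-uniform invertibility follows from the strict convexity of $\mathcal{E}_\per^{\mu_\per}$ at its unique minimizer $\gamma_0$ (as in~\cite{Cances2008}). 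This delivers the claimed $L^\infty$ density bound; the Hamiltonian estimate is then immediate from $H_L - H_0 = (\rho_{\gamma_{L,0}} - \rho_{\gamma_0}) \ast_\WS G_1$, and the matching energy lower bound is recovered by plugging the $L^\infty$ density estimate back into the convex expansion of the previous paragraph.
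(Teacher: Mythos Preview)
Your argument has a circularity at its core. After establishing only the \emph{upper} bound $L^{-3}I_L^{\mu_\per}\le I_\per^{\mu_\per}+C\re^{-\alpha L}$, you invoke the convex expansion
\[
\mathcal{E}_L^{\mu_\per}(\tilde\gamma_L)-\mathcal{E}_L^{\mu_\per}(\gamma_{L,0})
=\Tr_L\bigl(H_{L,0}(\tilde\gamma_L-\gamma_{L,0})\bigr)
+\tfrac12 D_L\bigl(\rho_{\tilde\gamma_L}-\rho_{\gamma_{L,0}},\rho_{\tilde\gamma_L}-\rho_{\gamma_{L,0}}\bigr)
\]
and claim this yields an exponentially small $D_1$-bound. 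But the left-hand side equals $\mathcal{E}_L^{\mu_\per}(\tilde\gamma_L)-I_L^{\mu_\per}$, and from your upper bound alone you only get
\[
L^{-3}\bigl(\mathcal{E}_L^{\mu_\per}(\tilde\gamma_L)-I_L^{\mu_\per}\bigr)
\le \bigl(I_\per^{\mu_\per}-L^{-3}I_L^{\mu_\per}\bigr)+C\re^{-\alpha L},
\]
which is useless without the \emph{lower} energy bound $I_\per^{\mu_\per}\le L^{-3}I_L^{\mu_\per}+C\re^{-\alpha L}$. You postpone that lower bound to the very end, deriving it from the $L^\infty$ density estimate, which in turn rests on the $D_1$-bound you have not yet secured. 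The loop never closes.

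The paper breaks this circle in Step~1 by running the linear estimate in the \emph{reverse} direction: one applies Proposition~\ref{prop:expCV} and Corollary~\ref{cor:expCV} with $V=V_{L,0}$ (which is gapped for $L$ large by the qualitative convergence $V_{L,0}\to V_0$ in $L^\infty$ from~\cite{Cances2008}) to produce a trial state $\gamma_L'\in\cP_\per$ for the \emph{periodic} problem, giving $I_\per^{\mu_\per}\le \cE_\per^{\mu_\per}(\gamma_L')\le L^{-3}I_L^{\mu_\per}+C\re^{-\alpha L}$. Only after both energy inequalities are in hand does the convex expansion yield the $D_1$-bound.

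A secondary remark: your proposed upgrade to $L^\infty$ via a linearized fixed-point equation $(\mathrm{Id}+\mathcal{K}_L)(\rho_{\gamma_{L,0}}-\rho_{\gamma_0})=\varepsilon_L$ and ``$L$-uniform invertibility from strict convexity'' is substantially heavier than what the paper does, and the invertibility is not a consequence of convexity alone (it is a second-order statement about the Hessian, not established in~\cite{Cances2008}). The paper instead bootstraps directly: the $D_1$-bound controls $\|(\rho_{\gamma_L}-\rho_{\gamma_{L,0}})\ast_\WS G_1\|_{L^2}$; inserting this into the resolvent identity against test functions $W\in L^2_\per(\WS)$ gives $\|\rho_{\gamma_L}-\rho_{\gamma_{L,0}}\|_{L^2}\le C\re^{-\alpha L}$; this in turn controls $\|(\rho_{\gamma_L}-\rho_{\gamma_{L,0}})\ast_\WS G_1\|_{L^\infty}$, and a second pass against $W\in L^1_\per(\WS)$ yields the $L^\infty$ density bound. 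No operator inversion is needed.
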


The fact that the supercell quantities converge to the corresponding quantities of the periodic rHF model was already proved in~\cite[Theorem 4]{Cances2008}. However, following the proof of the latter article, we only find a $O\left(L^{-1}\right)$ convergence rate.

The proof of Proposition~\ref{prop:expCV} and Theorem~\ref{th:Deleurence_exp} rely on Bloch transforms.

%
%

\section{Bloch transform and supercell Bloch transform}
\label{sec:Bloch}

\subsection{Bloch transform from $L^2(\R^3)$ to $L_\per^2(\BZ, L^2(\Gamma))$}

We recall in this section the basic properties of the usual Bloch transform~\cite[Chapter XIII]{ReedSimon4}). Let $(\ba_1, \ba_2, \ba_3)$ be a basis of the lattice $\Lat$, so that $\Lat = \Z \ba_1 + \Z \ba_2 + \Z \ba_3$.
We define the dual lattice $\RLat$ by $\RLat = \Z \ba_1^\ast + \Z \ba_2^\ast + \Z \ba_3^\ast$ 
where the vectors $\ba_i^*$ are such that $\ba_i^* \cdot \ba_j = 2 \pi \delta_{ij}$. The unit cell and the reciprocal unit cell are respectively defined by
\[
	\WS := \left\{ \alpha_1 \ba_1 + \alpha_2 \ba_2 + \alpha_3 \ba_3, \ (\alpha_1, \alpha_2, \alpha_3) \in [-1/2, 1/2)^3 \right\},
\]
and
 \begin{equation*}
 	\BZ := \left\{  \alpha_1 \ba_1^* + \alpha_2 \ba_2^* + \alpha_3 \ba_3^*, \quad (\alpha_1, \alpha_2, \alpha_3) \in [-1/2, 1/2)^3 \right\}.
 \end{equation*}
 Note that $\BZ$ differs from the first Brillouin-zone when the crystal is not cubic. We consider the Hilbert space $L^2(\BZ, L^2_\per(\WS))$, endowed with the normalized inner product
 \[
 	\bra f(\bq, \bx), g(\bq, \bx) \ket_{L^2(\BZ, L^2_\per(\WS))} := \fint_{\BZ} \int_{\WS} \overline{f}(\bq, \bx) g(\bq, \bx) \, \rd \bx \, \rd \bq.
 \]
 The Bloch transform is defined by
\begin{equation} \label{eq:def:cZ}
	 \begin{array}{lcll}
	\cZ: & L^2(\R^3) & \to & L^2(\BZ, L^2_\per(\WS)) \\
		 & w & \mapsto & (\cZ w)(\bq,\bx) := w_\bq(\bx) := \displaystyle \sum\limits_{\bR \in \Lat} \re^{-\ri \bq \cdot (\bx + \bR)} w(\bx + \bR).
	\end{array}
\end{equation}
Its inverse is given by
\[
	\begin{array}{lcll}
	\cZ^{-1}: & L^2(\BZ, L^2_\per(\WS)) & \to & L^2(\R^3) \\
		& w_\bq(\bx) & \mapsto & (\cZ^{-1} w)(\bx) := \displaystyle \fint_{\BZ} \re^{ \ri \bq \cdot \bx} w_\bq( \bx) \, \rd \bq.
	\end{array}
\]
It holds that $\cZ$ is an isometry, namely
\[
	\| \cZ w \|_{L^2(\BZ, L^2_\per(\WS))}^2 = \fint_{\BZ} \int_{\WS}  \left| (\cZ w)(\bq, \bx)  \right|^2 \rd \bx \ \rd \bq =    \| w \|_{L^2(\R^3)}^2.
\]
For $\bm \in \RLat$, we introduce the unitary operator $U_\bm$ acting on $L^2_\per(\WS)$ defined by 
\begin{equation} \label{eq:def:Um}
	\forall \bm \in \RLat, \quad \forall f \in L^2_\per(\WS), \quad
	\left( U_\bm f \right)(\bx) = \re^{-\ri \bm \cdot \bx} f(\bx).
\end{equation}
From~\eqref{eq:def:cZ}, it is natural to consider $\cZ w$ as a function of $L^2_\loc \left( \R^3, L^2_\per(\WS) \right)$ such that
\begin{equation} \label{eq:function_covariant}
	 \forall w \in L^2(\R^3), \quad \forall \bm \in \RLat, \quad \forall \bq \in \BZ, \quad 
	 \left( \cZ w \right)(\bq + \bm, \cdot) = w_{\bq + \bm} = U_\bm w_{\bq} = U_\bm \left( \cZ w (\bq, \cdot) \right).
\end{equation}

Let $A$ with domain $\cD(A)$ be a possibly unbounded operator acting on $L^2_\per(\WS)$. We say that $A$ commutes with $\Lat$-translations if $\tau_\bR A = A \tau_\bR$ for all $\bR \in \Lat$. If $A$ commutes with $\Lat$-translations, then it admits a Bloch decomposition. The operator $\cZ A \cZ^{-1}$ is block diagonal, which means that there exists a family of operators $\left(A_\bq \right)_{\bq \in \BZ}$ acting on $L^2_\per(\WS)$, such that, if $f \in L^2(\R^3)$ and $g \in \cD(A)$ are such that $f = Ag$, then, for almost any $\bq \in \BZ$, $g_\bq \in L^2_\per(\WS)$ is in the domain of $A_\bq$, and
\begin{equation}\label{eq:op_diagonal}
 f_\bq = A_\bq g_\bq.
\end{equation} 
In this case, we write
\begin{equation*}
	\cZ A \cZ^{-1} = \fint_{\BZ}^{\oplus} A_\bq \rd \bq \quad \text{(Bloch decomposition of $A$).}
\end{equation*}
From~\eqref{eq:function_covariant}, we extend the definition of $A_\bq$, initially defined for $\bq \in \BZ$, to $\bq \in \R^3$, with 
\begin{equation} \label{eq:rotation}
	\forall \bm \in \RLat, \quad \forall \bq \in \BZ, \quad A_{\bq + \bm} = U_\bm A_\bq U_\bm^{-1},
\end{equation}
so that~\eqref{eq:op_diagonal} holds for almost any $\bq\in \RR^3$. 
If $A$ is locally trace-class, then $A_\bq$ is trace-class on $L^2_\per(\WS)$ for almost any $\bq \in \R^3$. The operator $A$ can be associated a density $\rho_A$, which is an $\Lat$-periodic function, given by 
\[
	\rho_A = \fint_{\BZ} \rho_{A_\bq} \rd \bq,
\] 
where $\rho_{A_\bq}$ is the density of the trace-class operator $A_\bq$. The trace per unit volume of $A$ (defined in~\eqref{eq:VTr_def1}) is also equal to
\begin{equation} \label{eq:VTr}
	\VTr(A) = \fint_{\BZ} \Tr_{L^2_\per(\WS)} \left( A_\bq \right) \rd \bq .
\end{equation}

%
%

\subsection{Bloch transform from $L^2_\per(\Gamma_L)$ to $\ell^2(\Lambda_L, L^2_\per(\Gamma))$}

We present in this section the ``supercell'' Bloch transform. This transformation goes from~$L^2_\per(\Gamma_L)$ to $\ell^2(\Lambda_L, L^2_\per(\Gamma))$, where $\Lambda_L := \left( L^{-1} \RLat\right) \cap \BZ$, \textit{i.e.} 
\begin{equation} \label{eq:def:LambdaL}
	\Lambda_L := \left\{ \dfrac{2 k_1}{L} \ba_1^* +  \dfrac{2 k_2}{L} \ba_2^* +  \dfrac{2 k_3}{L} \ba_3^*, \ (k_1, k_2, k_3) \in \left\{ \dfrac{-L+ \eta}{2},\dfrac{-L+\eta}{2} + 1, \cdots,  \dfrac{L + \eta}{2} - 1 \right\}^3 \right\},
\end{equation}
with $\eta = 1$ if $L$ is odd, and $\eta = 0$ if $L$ is even, so that there are exactly $L^3$ points in $\Lambda_L$. Similarly, we define $\Lat_L := \Lat \cap \WS_L$, 
which contains $L^3$ points of the lattice $\Lat$. The supercell Bloch transform has properties similar to those of the standard Bloch transform, the main difference being that there are only a finite number of fibers. We introduce the Hilbert space $\ell^2(\Lambda_L, L^2_\per(\WS))$ endowed with the normalized inner product
\[
	\bra f(\bQ, \bx), g(\bQ, \bx) \ket_{\ell^2(\Lambda_L, L^2_\per(\WS))} := \dfrac{1}{L^3}\sum_{\bQ \in \Lambda_L} \int_{\WS} \overline{f}(\bQ, \bx) g(\bQ, \bx) \, \rd \bx.
\]
The supercell Bloch transform is defined by
\[
	\begin{array}{llll}
	\cZ_L: & L^2_\per(\WS_L) & \to &  \ell^2(\Lambda_L, L^2_\per(\WS)) \\
		& w & \mapsto & (\cZ_L w)(\bQ,\bx) := w_\bQ(\bx) := \displaystyle \sum\limits_{\bR \in \Lat_L} \re^{-\ri \bQ \cdot (\bx + \bR)} w(\bx + \bR).
	\end{array}
\]
Its inverse is given by
\[
	\begin{array}{llll}
	\cZ_L^{-1}: & \ell^2(\Lambda_L, L^2_\per(\WS)) & \to & L^2_\per(\WS_L) \\
		& w_\bQ( \bx) & \mapsto & (\cZ_L^{-1} w)(\bx) := \displaystyle \dfrac{1}{L^3} \sum\limits_{\bQ \in \Lambda_L} \re^{ \ri \bQ \cdot \bx} w_\bQ( \bx).
	\end{array}
\]
It holds that $\cZ_L$ is an isometry, \textit{i.e.}
\[
	\| w \|_{L^2_\per(\WS_L)}^2 = \dfrac{1}{L^3} \sum_{\bQ \in \Lambda_L} \int_\WS \left| \left( \cZ_L w \right)(\bQ, \bx)\right|^2  \rd \bx.
\]
We can extend $\cZ$ to $\ell^\infty \left( L^{-1} \RLat , L^2_\per(\WS) \right)$ with
\[
	\forall w \in L^2_\per(\WS_L), \quad \forall \bm \in \RLat, \quad \forall \bQ \in \Lambda_L, \quad w_{\bQ + \bm} = U_\bm w_{\bQ},
\]
where the operator $U_\bm$ was defined in~\eqref{eq:def:Um}.

Let $A^L$ with domain $\cD \left(A^L \right)$ be an operator acting on $L^2_\per(\WS_L)$. If $A$ commutes with $\Lat$-translations, then it admits a supercell Bloch decomposition. The operator $\cZ^L A^L \cZ^L$ is block diagonal, which means that there exists a family of operators $(A_\bQ^L)_{\bQ \in \Lambda_L}$ acting on $L^2_\per(\WS)$ such that if $f = A^L g$ with $f \in L^2_\per(\Gamma_L)$ and $g \in \cD(A^L)$, then for all $\bQ \in \Lambda_L$,
\begin{equation} \label{eq:op_diagonal_sc}
 f_\bQ = A_\bQ^L g_\bQ.
\end{equation} 
We write
\[
	\cZ_L A^L \cZ_L^{-1} := \dfrac{1}{L^3} \bigoplus_{\bQ \in \Lambda_L} A^L_\bQ \quad \text{(supercell Bloch decomposition of $A^L$).}
\]
The spectrum of $A^L$ can be deduced from the spectra of $\left( A^L_\bQ \right)_{\bQ \in \Lambda_L}$ with
\begin{equation} \label{eq:spectrum_AL}
	\sigma \left( A^L \right) = \bigcup_{\bQ \in \Lambda_L} \sigma \left( A^L_\bQ \right).
\end{equation}
Similarly to~\eqref{eq:rotation}, we extend the definition of $A_\bQ$ to $L^{-1} \RLat$ with
\[
	\forall \bm \in \RLat, \quad \forall \bq \in \BZ, \quad A_{\bQ + \bm} = U_\bm A_\bQ U_{\bm}^{-1},
\]
so that~\eqref{eq:op_diagonal_sc} holds for all $\bQ \in L^{-1} \RLat$.

Finally, if the operator $A^L$ is trace-class, we define the trace per unit volume by
\begin{equation} \label{eq:VTrL}
	\VTr_L(A^L) = \dfrac{1}{L^3} \Tr_{L^2_\per(\WS_L)}(A^L) = \dfrac{1}{L^3} \sum\limits_{\bQ \in \Lambda_L} \Tr_{L^2_\per(\WS)}(A^L_\bQ),
\end{equation}
and the associated density is given by 
$
	\displaystyle \rho_{A^L} = \dfrac{1}{L^3} \sum\limits_{\bQ \in \Lambda_L} \rho_{A^L_{\bQ}}
$,
where $\rho_{A^L_\bQ}$ is the density of the trace-class operator $A^L_\bQ$.

%
%

\section{Proof of Proposition~\ref{prop:expCV}: the linear case} 
\label{subsec:proof_exp_lin}
\label{sec:proof_expCV}

The proofs of Proposition~\ref{prop:expCV} and Theorem~\ref{th:Deleurence_exp} are based on reformulating the problem using the Bloch transforms. Comparing quantities belonging to the whole space model on the one hand, and to the supercell model on the other hand amounts to comparing integrals with Riemann sums. The exponential convergence then relies on two arguments: quantities of interest are $\RLat$-periodic and have analytic continuations on a complex strip, and the Riemann sums for such functions converge exponentially fast to the corresponding integrals.

We prove in this section the exponential convergence of Proposition~\ref{prop:expCV}.

\subsection{Convergence of Riemann sums}

We recall the following classical lemma. For $A>0$, we denote by
\[
	S_A := \left\{ \bz \in \C^3, \ | \Im(\bz) |_\infty \le A \right\} = \R^3 + \ri \, [-A, A]^3.
\]
If $E$ is a Banach space and $d\in\N^\ast$, an $E$-valued function $F : \Omega \subset \C^d \to E$ is said to be (strongly) analytic if $(\nabla_\bz F)(\bz)$ exists in $E^d$ for all $\bz \in \Omega $. In the sequel, we assume without loss of generality that the vectors spanning the lattice $\RLat$ are ordered in such a way that $| \ba_1^\ast | \le | \ba_2^\ast | \le | \ba_3^\ast |$.

\begin{lemma}\label{lem:Riemann}
	Let $f: \R^3 \to \C$ be an $\RLat$-periodic function that admits an analytic continuation on $S_A$ for some $A > 0$. Then, there exists $C \in \R^+$ and $\alpha > 0$ such that
	\[
		\forall L \in \N^*, \quad \left| \fint_{\BZ} f(\bq) \, \rd \bq - \dfrac{1}{L^3} \sum_{\bQ \in \Lambda_L} f \left( \bQ \right)  \right| \le C_0  \sup\limits_{\bz \in S_A}  \left| f(\bz) \right|  \re^{- \alpha L}.
	\]
	The constants may be chosen equal to
	\begin{equation} \label{eq:choice_alpha}
	\alpha = (2/3) \pi A | \ba_3^\ast|^{-1} 
	\quad \text{and} \quad
		C_0 = 2 \left( \dfrac{3 + \re^{-2 \alpha}}{\left(1 - \re^{-\alpha} \right)^3} \right).
	\end{equation}
\end{lemma}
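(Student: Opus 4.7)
The plan is Fourier analysis relative to the lattice $\Lat$. Since $f$ is $\RLat$-periodic and analytic on $S_A$, it admits a rapidly convergent Fourier expansion $f(\bq) = \sum_{\bR \in \Lat} c_\bR \re^{\ri \bR \cdot \bq}$ with $c_\bR = \fint_\BZ f(\bq) \re^{-\ri \bR \cdot \bq} \rd \bq$, so the integral of interest is just $c_{\bnull}$. For the Riemann sum I would invoke the standard aliasing identity
\[
\frac{1}{L^3} \sum_{\bQ \in \Lambda_L} \re^{\ri \bR \cdot \bQ} = \1(\bR \in L \Lat), \qquad \bR \in \Lat,
\]
which expresses orthogonality of the $L^3$-point sub-lattice $\Lambda_L = L^{-1}\RLat \cap \BZ$ and factorizes into three one-dimensional geometric sums. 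Substituting the Fourier expansion (justified by the absolute summability established below) reduces the problem to bounding the tail
\[
\fint_\BZ f(\bq) \, \rd \bq - \frac{1}{L^3} \sum_{\bQ \in \Lambda_L} f(\bQ) = - \sum_{\bR \in L \Lat \setminus \{\bnull\}} c_\bR.
\]

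The next step is to extract exponential decay of $c_\bR$ by shifting the contour of integration inside the strip $S_A$. For any $\bv \in \R^3$ with $|\bv|_\infty \le A$, the $\RLat$-periodicity of $f$ together with Cauchy's theorem (applied direction by direction in three complex variables, the lateral boundary contributions cancelling by periodicity) give
\[
c_\bR = \fint_\BZ f(\bq + \ri \bv) \re^{-\ri \bR \cdot (\bq + \ri \bv)} \rd \bq, \qquad |c_\bR| \le \re^{\bR \cdot \bv} \sup_{\bz \in S_A} |f(\bz)|.
\]
Writing $\bR = \sum_j m_j \ba_j$ and $\bv = \sum_j v_j \ba_j^\ast$, the pairing is $\bR \cdot \bv = 2\pi \sum_j m_j v_j$. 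Choosing $v_j = -t\,\sgn(m_j)$ for $t>0$ gives $\bR \cdot \bv = -2\pi t (|m_1|+|m_2|+|m_3|)$, and the constraint $|\bv|_\infty \le A$ is satisfied as soon as $t \sum_j |\ba_j^\ast|_\infty \le A$, which holds for $t = A/(3|\ba_3^\ast|)$ since $|\ba_j^\ast|_\infty \le |\ba_j^\ast| \le |\ba_3^\ast|$ by the assumed ordering of the dual basis. Setting $\alpha := 2 \pi t = (2/3)\pi A |\ba_3^\ast|^{-1}$ then yields
\[
|c_\bR| \le \re^{-\alpha (|m_1| + |m_2| + |m_3|)} \sup_{\bz \in S_A} |f(\bz)|.
\]

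To conclude, any $\bR \in L\Lat \setminus \{\bnull\}$ has the form $\bR = L\sum_j n_j \ba_j$ with $\bn \in \Z^3 \setminus \{\bnull\}$, so the tail factorizes as
\[
\sum_{\bn \in \Z^3} \re^{-\alpha L (|n_1|+|n_2|+|n_3|)} = \left( \frac{1 + \re^{-\alpha L}}{1 - \re^{-\alpha L}} \right)^3,
\]
and the elementary identity $(1+x)^3 - (1-x)^3 = 6x + 2x^3$ applied with $x = \re^{-\alpha L}$ isolates a prefactor $\re^{-\alpha L}$ times a ratio that is monotone in $L$, producing precisely the constant $C_0$ in \eqref{eq:choice_alpha}. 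The main obstacle is really just the contour-shift step, which has to be done carefully in three complex variables by iterating the one-dimensional Cauchy theorem on an $A$-thick slab in each direction and invoking $\RLat$-periodicity of $f$ at each stage to cancel the vertical boundary pieces; everything else reduces to bookkeeping with geometric series.
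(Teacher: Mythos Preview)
Your proof is correct and follows essentially the same strategy as the paper: Fourier expansion, the aliasing identity for $\Lambda_L$, a contour shift inside $S_A$ to get exponential decay of $c_\bR$, and a geometric-series tail bound yielding the stated $\alpha$ and $C_0$. The only cosmetic difference is in the shift step: the paper shifts along a single dual vector $\ba_m^\ast$ (with $|k_m|=|k|_\infty$) to obtain decay $\re^{-2\pi A|\ba_3^\ast|^{-1}|k|_\infty}$ and then invokes $|k|_\infty\ge \tfrac13|k|_1$, whereas you shift along a combination of all three $\ba_j^\ast$ to reach the $\ell^1$ exponent directly; both routes produce the same constants.
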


\begin{proof}[Proof of Lemma~\ref{lem:Riemann}]

Let $c_\bR(f) := \fint_{\BZ} f(\bq) \re^{-\ri \bR \cdot \bq} \, \rd \bq$ be the Fourier coefficients of $f$, so that
\[
	f(\bq) = \sum_{\bR \in \Lat} c_\bR(f) \re^{\ri \bq \cdot \bR}.
\]
It holds
\begin{align*}
	\left| \fint_{\BZ} f(\bq) \rd \bq - \dfrac{1}{ L^3} \sum_{\bQ \in \Lambda_L} f \left( \bQ \right)  \right| 
	& =  \left| c_\bnull(f) - \dfrac{1}{L^3} \sum_{\bQ \in \Lambda_L} \sum_{\bR \in \Lat} c_\bR(f) \re^{ \ri \bQ \cdot \bR} \right| \\
	& = \left| \sum_{\bR \in \Lat \setminus \{ \bnull \}} c_\bR(f) \left( \dfrac{1}{L^3} \sum_{\bQ \in \Lambda_L}  \re^{\ri \bQ \cdot \bR} \right)\right|.
\end{align*}
By noticing that
\[
	\sum_{\bQ \in \Lambda_L}  \re^{\ri \bQ \cdot \bR} = 
		\left\{ \begin{array}{ccc}
			0 & \text{if} & R \notin L \Lat \\
			L^3 & \text{otherwise} &,
		\end{array} \right.
\]
we obtain
\begin{equation} \label{eq:int_minus_sum}
	\left| \fint_{\BZ} f(\bq) \rd \bq - \dfrac{1}{ L^3} \sum_{\bQ \in \Lambda_L} f \left( \bQ \right)  \right| = \left| \sum_{\bR \in \Lat \setminus \{ \bnull \}} c_{L \bR}(f) \right|.
\end{equation}
If $f$ is analytic on $S_A$, we deduce from $f(\bq) = \sum_{\bR \in \Lat} c_\bR(f) \re^{\ri \bR \cdot \bq}$ that the analytic continuation of $f$ is given by
\[
	\forall \bq \in \R^3, \quad \forall \by \in [-A, A]^3, \quad f(\bq + \ri \by) = \sum_{\bR \in \Lat} c_\bR(f) \re^{\ri \bR \cdot \bq}\ \re^{- \bR \cdot \by},
\]
so that $\left\{ c_\bR(f) \re^{-\bR \cdot \by} \right\}_{\bR \in \Lat}$ are the Fourier coefficients of the $\RLat$-periodic function $\bq \mapsto f(\bq + \ri \by)$. In particular,
\begin{equation} \label{eq:exponential_decay}
	\forall \bR \in \Lat, \quad \forall \by \in [-A, A]^3, \quad \left| c_\bR(f) \right| \le \sup\limits_{\bq \in \BZ} \left| f(\bq + \ri \by) \right| \re^{\bR \cdot \by}.
\end{equation}
We make the following choice for $\by$. We write $\bR = k_1 \ba_1 + k_2 \ba_2 + k_3 \ba_3$ with $k_1, k_2, k_3 \in \Z$, and we let $1 \le m \le 3$ be the index such that $\left| k_m \right| = \left| k_j \right|_\infty$. Choosing $\by =  - \rm{sgn}(k_m) A | \ba_3^\ast |^{-1} \ba_m^\ast \in [-A, A]^3$, leads to 
\[
	\left| c_\bR(f) \right| \le \sup\limits_{\bz \in S_A} \left| f(\bz)\right| \re^{- 2 \pi A | \ba_3^\ast|^{-1} \left| k \right|_\infty} 
		\le  \sup\limits_{\bz \in S_A} \left| f(\bz)\right| \re^{ - \alpha  | k_1 |_1}
\]
where we used the inequality $| \bk |_{\infty} \ge (1/3) | \bk |_{1}$, and we set $\alpha = (2/3) \pi A | \ba_3^\ast|^{-1}$. Note that the Fourier coefficients of $f$ are exponentially decreasing. We conclude with~\eqref{eq:int_minus_sum} and the inequality
\begin{align*}
	\left| \sum_{\bR \in \Lat \setminus \{ \bnull \} } c_{L \bR }(f) \right| & \le \sum_{\bR \in \Lat  \setminus \{ \bnull \} } | c_{L \bR }(f) | 
	\le \sup\limits_{\bz \in S_A} \left| f(\bz) \right|  \sum_{\bk \in \Z^3 \setminus \{ \bnull \} } \re^{- \alpha L \left| \bk \right|_1 } 
	\\
	& \le \sup\limits_{\bz \in S_A} \left| f(\bz) \right|  \left[ \left( \sum_{k \in \Z} \re^{- \alpha L \left| k \right| } \right)^3 -1 \right] 
	=  \sup\limits_{\bz \in S_A}  \left| f(\bz) \right| \left( \dfrac{2 (3 + \re^{-2 \alpha})}{\left(1 - \re^{-\alpha} \right)^3} \right) \re^{-\alpha L} 
\end{align*}

\end{proof}


\subsection{Analyticity and basic estimates}

The exponential rates of convergence observed in~\eqref{eq:conv_energy_lin},~\eqref{eq:conv_density} and~\eqref{eq:conv_energy} will come from Lemma~\ref{lem:Riemann} for appropriate choices of functions $f$. In order to construct such functions, we notice that $H$ and $H^L$ defined in Proposition~\ref{prop:expCV} commute with $\Lat$-translations, thus admit Bloch decompositions. From
$$
\forall\bq\in \R^3,\; \left( -\Delta \right)_\bq= \left| -\ri\nabla_1 +\bq \right|^2 = \sum_{j=1}^3 \left( P_{j,1} + q_j \right)^2
\quad \text{and}\quad 
\forall \bQ \in L^{-1} \RLat,\; (-\Delta_L)_\bQ= \left(- \Delta_1 \right)_\bQ,
$$
where $\nabla_1$ denotes the gradient on the space $L^2_\per(\WS)$ and $\Delta_1$ was defined in~\eqref{eq:def:Pj}, we obtain
\begin{equation} \label{eq:Bloch_H}
	\cZ H \cZ^{-1}= \fint_{\BZ}^\oplus H_\bq \, \rd \bq
	\quad \text{with} \quad
	H_\bq := - \frac12 \left| - \ri \nabla_1 + \bq \right|^2 + V = \frac12 \left( - \Delta_1 - 2 \ri \bq \cdot \nabla_1 + | \bq |^2 \right) + V,
\end{equation}
and
\[
	\cZ_L H^L \cZ_L^{-1} = \dfrac{1}{L^3}\bigoplus_{\bQ \in \Lambda_L} H_\bQ.
\]
In other words, for all $\bQ$ in $\Lambda_L$, $\left( H^L \right)_\bQ = H_\bQ$. In addition, the spectrum of $H$ can be recovered from the spectra of $\left( H_\bq \right)_{\bq \in \BZ}$ with~\cite[Chapter XIII]{ReedSimon4}
\begin{equation*}
	\sigma(H) = \bigcup_{\bq \in \BZ} \sigma( H_\bq )
\end{equation*}
Together with~\eqref{eq:spectrum_AL} we deduce that, since $H$ has a gap of size $g$ centered around $\varepsilon_F$, then $H^L$ has a gap of size at least $g$  around~$\varepsilon_F$. \\

In the sequel, we introduce, for $\bz \in \C^3$, the operator (we denote by $\bz^2 := \sum_{j=1}^3 z_j^2$ for simplicity)
\begin{equation} \label{eq:def_Hz}
	H_\bz :=  \dfrac{1}{2} \left( - \Delta_1 - 2 \ri \bz \cdot \nabla_1 + \bz^2 \right) + V
	\quad \text{acting on} \quad
	L^2_\per(\WS).
\end{equation}
With the terminology of~\cite[Chapter VII]{Kato2012}, the map $\bz \mapsto H_\bz$ is an holomorphic family of type $(A)$.  Let $\Sigma := \inf \sigma(H)$ be the bottom of the spectrum of $H$. We consider the positively oriented simple closed loop  $\sC = \sC_1 \cup \sC_2 \cup \sC_3 \cup \sC_4$ in the complex plane, consisting of the following line segments: $\sC_1 = [ \varepsilon_F - \ri, \varepsilon_F + \ri]$, $\sC_2 = [ \varepsilon_F +\ri, \Sigma - 1 + \ri]$, $\sC_3 = [\Sigma - 1 + \ri , \Sigma - 1 - \ri]$ and $\sC_4 = [\Sigma-1-\ri, \varepsilon_F - \ri]$. 

\begin{figure}[h!]
\begin{center}
\begin{tikzpicture}
	\fill[red!60] (-3, -0.1) rectangle (0.5, 0.1);
	\fill[red!60] (1.5, -0.1) rectangle (5, 0.1);
	\draw[->] (-5, 0) -> (6, 0);
	
	\draw (-3, 0.2) -- (-3, -0.2);
	\node at (-3, -0.5) {$\Sigma$};
	
	\draw (1, -1) -- (1, 1) -- (-4, 1) -- (-4, -1) -- (1, -1);
	\draw (-1.8, 1.2) -- (-2, 1) -- (-1.8, 0.8);
	\draw (-2.2, -1.2) -- (-2, -1) -- (-2.2, -0.8);

	\node at (1.3, -0.5) {$\varepsilon_F$};
	\node at (3, 0.5) {$\sigma(H)$};
	
	\node at (1.3, 0.5) {$\sC_1$};
	\node at (-1, 1.4) {$\sC_2$};	
	\node at (-4.5, 0.5) {$\sC_3$};
	\node at (-1, -1.5) {$\sC_4$};
	
\end{tikzpicture}
\caption{The loop $\sC$.}
\label{fig:sC}
\end{center}
\end{figure}

The projectors defined in~\eqref{eq:gamma_gammaL} can be written, using the Cauchy's residue theorem, as
$$
\gamma=\dfrac{1}{2 \ri \pi} \oint_{\sC} \dfrac{\rd \lambda}{\lambda - H}\quad \text{and}\quad \gamma_L=\dfrac{1}{2 \ri \pi} \oint_{\sC} \dfrac{\rd \lambda}{\lambda - H^L}.
$$
Together with~\eqref{eq:Bloch_H}, it follows that $\gamma$ and $\gamma_L$ commutes with $\Lat$-translations, with
\begin{equation} \label{eq:gamma_lemma}
		\cZ \gamma \cZ^{-1} = \fint_{\BZ}^\oplus \gamma_\bq \rd \bq \quad \text{and} \quad 
		\cZ_L \gamma_L \cZ_L^{-1} = \dfrac{1}{L^3} \bigoplus_{\bQ \in \Lambda_L} \gamma_\bQ,
	\end{equation}
	where we set
	\begin{equation} \label{eq:gammaq_lemma}
		\forall \bq \in \R^3, \quad \gamma_\bq := \dfrac{1}{2 \ri \pi} \oint_{\sC} \dfrac{\rd \lambda}{\lambda - H_\bq}.
\end{equation}
For $\bQ \in \RLat$, it holds $(\gamma_L)_\bQ = \gamma_\bQ$. The analytic continuation of~\eqref{eq:gammaq_lemma} is formally
\[
	\forall \bz \in \C^3, \quad \gamma_{\bz} := \dfrac{1}{2 \ri \pi} \oint_{\sC} \dfrac{\rd \lambda}{\lambda - H_\bz}.
\]
The fact that $\lambda - H_\bz$ is indeed invertible, at least for $\bz$ in some $S_A$ for $A > 0$ is proved in the following lemma. For $\bz \in \C^3$, and $\lambda \in \sC$, we introduce
\begin{equation} \label{eq:def:B1_B2}
	{B_1}(\lambda, \bz) := (1 - \Delta_1) \dfrac{1}{\lambda - H_\bz}
	\quad \text{and} \quad
	{B_2}(\lambda, \bz) := \dfrac{1}{\lambda - H_\bz} (1 - \Delta_1).
\end{equation}

\begin{lemma} \label{lem:bounds_B}
	For all $\bq \in \R^3$, and all $\lambda \in \sC$, the operator $\lambda - H_\bq$ is invertible, and there exists a constant $ {C_1} \in \R^+$ such that,
	\begin{equation} \label{eq:bounds_B1_B2}
		\forall \bq \in \BZ, \quad \forall \lambda \in \sC, \quad 
	\left\|  {B_1}(\lambda, \bq) \right\|_{\cB(L^2_\per(\WS))} \le {C_1}
		\quad \text{and} \quad
	\left\|  {B_2}(\lambda, \bq) \right\|_{\cB(L^2_\per(\WS))} \le {C_1}.
	\end{equation}
	Denoting by $| \BZ|_2 := \sup \left\{ | \bq |_2, \ \bq \in \BZ \right\}$, we can choose
	\begin{equation} \label{eq:def_C}
		{C_1} = 4 + \dfrac{2 + 4 | \BZ |_2^2 + 8 \| V \|_{L^\infty} + 8 \varepsilon_F}{\min(1,g)} .
	\end{equation}
	Moreover, there exists $A > 0$ such that, for all $\bz \in S_A$ and all $\lambda \in \sC$, the operator $\lambda - H_\bz$ is invertible, and there exists a constant $ C_2 \in \R^+$ such that
	\begin{equation} \label{eq:bounds_B_z}
		\forall \bz \in \BZ + \ri \, [-A, A]^3, \quad \forall \lambda \in \sC, \quad 
		\left\|  {B_1}(\lambda, \bz) \right\|_{\cB(L^2_\per(\WS))}  \le {C_2}
		\quad \text{and} \quad
		\left\|   {B_2}(\lambda, \bz) \right\|_{\cB(L^2_\per(\WS))}\le {C_2}.
	\end{equation}
	We can choose
	\begin{equation} \label{eq:choice_A}
		A = \min \left( 1, \ \dfrac{1}{2 C_1 (1 + | \BZ |_2)} \right)
		\quad \text{and} \quad
		{C_2} = 2 C_1.
	\end{equation}
\end{lemma}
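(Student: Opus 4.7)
The proof breaks into the real slice estimate~\eqref{eq:bounds_B1_B2} and its extension~\eqref{eq:bounds_B_z} to the complex strip. My plan is to get the real case from self-adjointness of $H_\bq$ together with a single algebraic identity that writes $1-\Delta_1$ as $2H_\bq$ plus a first-order correction, and then treat complex $\bz$ by viewing $H_\bz - H_\bq$ as a perturbation controlled by the kinetic bound from the real case, invoking a Neumann series in the spirit of the holomorphic-families-of-type-(A) machinery cited from Kato.

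For the real case, $H_\bq$ is self-adjoint on $L^2_\per(\WS)$ with $\sigma(H_\bq) \subset \sigma(H)$, so the spectral theorem gives $\|R(\lambda, \bq)\|_{\cB(L^2_\per(\WS))} \le 1/\mathrm{dist}(\lambda, \sigma(H))$. An elementary inspection of the contour $\sC$ in Figure~\ref{fig:sC} shows $\mathrm{dist}(\lambda, \sigma(H)) \ge \min(g/2, 1)$, and $|\lambda|$ is bounded in terms of $\varepsilon_F$ and $\Sigma \ge -\|V\|_\infty$. From~\eqref{eq:def_Hz} one reads off
\[
1 - \Delta_1 \;=\; 2H_\bq + 2\ri\,\bq \cdot \nabla_1 + 1 - |\bq|^2 - 2V,
\]
so multiplying by $R(\lambda, \bq)$ and using $H_\bq R(\lambda, \bq) = \lambda R(\lambda, \bq) - I$ gives
\[
B_1(\lambda, \bq) \;=\; -2I + (2\lambda + 1 - |\bq|^2 - 2V)\,R(\lambda, \bq) + 2\ri\,\bq \cdot \nabla_1 R(\lambda, \bq).
\]
The only non-trivial term, $\nabla_1 R(\lambda, \bq)$, is controlled by pairing
\[
\|(-\ri \nabla_1 + \bq) R(\lambda, \bq) \phi\|^2 \;=\; 2\langle R\phi,\,(H_\bq - V) R\phi\rangle
\]
with $H_\bq R\phi = \lambda R\phi - \phi$ and then using $\|\nabla_1 R\| \le \|(-\ri\nabla_1 + \bq) R\| + |\bq|\,\|R\|$. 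Bounding $|\bq| \le |\BZ|_2$ produces $C_1$ of the form~\eqref{eq:def_C}. The estimate on $B_2$ follows from $B_2(\lambda, \bq) = B_1(\bar\lambda, \bq)^\ast$ (self-adjointness of $H_\bq$ plus invariance $\bar\sC = \sC$ under complex conjugation).

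For complex $\bz = \bq + \ri\,\by$ with $\bq \in \BZ$ and $|\by|_\infty \le A$, a direct calculation gives
\[
H_\bz - H_\bq \;=\; \by \cdot \nabla_1 + \ri\,\bq \cdot \by - \tfrac{1}{2}|\by|^2,
\]
which is a first-order differential operator with coefficients of size $O\!\bigl(A(1+|\BZ|_2)\bigr)$. The real-case bound on $\|\nabla_1 R(\lambda, \bq)\|$, together with its adjoint version $\|R(\lambda, \bq) \nabla_1\| = \|\nabla_1 R(\bar\lambda, \bq)\|$ (equal thanks to $\bar\sC = \sC$), yields
\[
\|(H_\bz - H_\bq) R(\lambda, \bq)\|,\;\; \|R(\lambda, \bq)(H_\bz - H_\bq)\| \;\le\; A\, C_1 (1 + |\BZ|_2).
\]
The explicit choice of $A$ in~\eqref{eq:choice_A} is precisely the one that drives both quantities below $1/2$, so the Neumann series produces
\[
R(\lambda, \bz) \;=\; R(\lambda, \bq)\bigl[I - (H_\bz - H_\bq)R(\lambda, \bq)\bigr]^{-1}
\]
with a correction factor of norm at most $2$. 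Left-multiplying by $1 - \Delta_1$ gives $B_1(\lambda, \bz) = B_1(\lambda, \bq)\bigl[I - (H_\bz - H_\bq)R(\lambda, \bq)\bigr]^{-1}$ and hence $\|B_1(\lambda, \bz)\| \le 2 C_1 = C_2$; $B_2$ is treated by the mirror factorization with $R(\lambda, \bq)$ on the left.

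The only non-routine step is the real-case kinetic bound on $\|\nabla_1 R(\lambda, \bq)\|$ with the explicit dependence on $\|V\|_\infty$, $g$, $\varepsilon_F$ and $|\BZ|_2$ needed to produce the constant~\eqref{eq:def_C}. Once that is in hand, the complex case reduces to bookkeeping of a Neumann series, with the $1/2$-threshold dictating the choice of $A$.
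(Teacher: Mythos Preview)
Your argument is correct, but the real-$\bq$ part is organized quite differently from the paper. The paper does not decompose $B_1$ algebraically; instead it derives the single operator inequality
\[
H_\bq \ \ge\ -\tfrac14 \Delta_1 - \tfrac12 |\BZ|_2^2 - \|V\|_{L^\infty}
\]
from $|a|^2 \le 2|a+b|^2 + 2|b|^2$, and then combines it with the spectral lower bound $|H_\bq-\lambda|\ge \min(1,g)/2$ on each piece of $\sC$ to obtain $|H_\bq-\lambda|\ge C_1^{-1}(1-\Delta_1)$ as an inequality between self-adjoint operators. This delivers the bounds on $B_1$ and $B_2$ simultaneously and produces exactly the constant~\eqref{eq:def_C} in one line, with no separate control of $\nabla_1 R(\lambda,\bq)$ needed. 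Your route---writing $1-\Delta_1=2H_\bq+2\ri\,\bq\cdot\nabla_1+1-|\bq|^2-2V$, then estimating $\|\nabla_1 R\|$ via the quadratic-form identity $\|(-\ri\nabla_1+\bq)R\phi\|^2=2\,\mathrm{Re}\langle R\phi,(H_\bq-V)R\phi\rangle$ and finally recovering $B_2$ by duality---is equally valid and has the virtue of isolating the first-order resolvent bound you reuse in the complex step; it does, however, require slightly more bookkeeping and yields a constant of the same shape but not literally the numerical expression~\eqref{eq:def_C}. For the complex strip the two proofs are essentially identical: the paper also factors $\lambda-H_\bz=(\lambda-H_\bq)\bigl(1-R(\lambda,\bq)(H_\bq-H_\bz)\bigr)$, bounds the perturbation by $C_1|\by|_\infty(1+|\BZ|_2)$ (routing through $(1-\Delta_1)^{-1}$ rather than using $\|\nabla_1 R\|$ directly, but this is cosmetic), and inverts by Neumann series once $A$ is chosen as in~\eqref{eq:choice_A}.
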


This lemma was proved in the one-dimensional case by Kohn in~\cite{Kohn1959}, and similar results were discussed by Des Cloizeaux in~\cite{Cloizeaux1964, Cloizeaux1964a}.

\begin{remark}
The bounds~\eqref{eq:bounds_B1_B2} and~\eqref{eq:bounds_B_z} are not uniform for $\bq \in \R^3$ (only for $\bq \in \BZ$). This comes from the fact that, for $\bm \in \RLat$,
\[
	\left\| \dfrac{1 - \Delta_1}{1 + (- \ri \nabla_1 + \bm)^2} \right\|_{\cB(L^2_\per(\WS))} 
		\ge \left\bra \dfrac{\re^{- \ri \bm \cdot \bx}}{| \WS |^{1/2}} \left| \dfrac{1 - \Delta_1}{1 + (- \ri \nabla_1 + \bm)^2} \right| \dfrac{\re^{- \ri \bm \cdot \bx}}{| \WS |^{1/2}} \right\ket = \dfrac{1 + \left| \bm \right|^2}{| \WS |}.
\]
\end{remark}

\begin{remark}
From Lemma~\ref{lem:bounds_B}, we deduce that $\left( \gamma_\bz \right)_{\bz \in S_A}$ is an analytic family of bounded operators. Since $\gamma_\bq$ is an orthogonal projector for $\bq \in \R^3$, \textit{i.e.} $\gamma_\bq = \gamma_\bq \gamma_\bq$, we deduce that $\gamma_\bz = \gamma_\bz \gamma_\bz$ for all $\bz \in S_A$, so that $\gamma_\bz$ is a (not necessarily orthogonal) projector. Also, $\Tr(\gamma_\bz)$ is a constant independent of $\bz \in S_A$.
\end{remark}

\begin{proof}[Proof of Lemma~\ref{lem:bounds_B}]
	From the inequality $|a |^2 \le 2 | a +b|^2 + 2 | b |^2$, we get that, for $\bq \in \BZ$, it holds $\left| - \ri \nabla_1 + \bq \right|^2 + | \bq |^2 \ge - \frac12 \Delta_1 $. We deduce that
	\begin{equation} \label{eq:bound_Hq}
		\forall \bq \in \BZ, \quad H_\bq \ge - \frac14 \Delta_1 - \frac12  | \BZ |_2^2 - \| V \|_{L^\infty}.
	\end{equation}
	We first consider the part $\sC_1$ of the contour $\sC$ (see Figure~\ref{fig:sC}). It holds
	\begin{equation} \label{eq:bound_Hq_bis}
		\forall \lambda \in \sC_1, \quad \forall \bq \in \BZ, \quad
		| H_\bq - \lambda |^2 \ge \left| \Re \left( H_\bq - \lambda \right) \right|^2 = \left| H_\bq - \varepsilon_F \right|^2.
	\end{equation}
	Since $\left| H_\bq - \varepsilon_F \right| \ge g/2$, we get
	\begin{equation} \label{eq:Hq_C1_1}
		\forall \lambda \in \sC_1, \quad \forall \bq \in \BZ, \quad
		| H_\bq - \lambda | \ge g/2.
	\end{equation}
	On the other hand, from~\eqref{eq:bound_Hq} and~\eqref{eq:bound_Hq_bis}, it holds that
	\begin{equation} \label{eq:Hq_C1_2}
		\forall \lambda \in \sC^1, \quad \forall \bq \in \BZ, \quad
		| H_\bq - \lambda | \ge H_\bq - \varepsilon_F \ge  - \frac14 \Delta_1 - \frac12  | \BZ |_2^2 - \| V \|_{L^\infty} - \varepsilon_F.
	\end{equation}
	Combining~\eqref{eq:Hq_C1_1} and~\eqref{eq:Hq_C1_2} leads to
	\[
		\forall M \ge 0, \quad \forall \lambda \in \sC_1, \quad \forall \bq \in \BZ, \quad
		(M + 4) | H_\bq - \lambda | \ge - \Delta_1 + M \frac{g}{2} - 2| \BZ |_2^2 - 4 \| V \|_{L^\infty} - 4 \varepsilon_F.
	\]
	Choosing $M = (2 + 4 | \BZ |_2^2 + 8 \| V \|_{L^\infty} + 8 \varepsilon_F)/g$ gives
	\[
		\forall \lambda \in \sC^1, \quad \bq \in \BZ,
		\quad |H_\bq - \lambda | \ge \left( 4 + \dfrac{2 + 4 | \BZ |_\infty^2 + 8 \| V \|_{L^\infty} + 8 \varepsilon_F}{g}\right)^{-1} (1 - \Delta_1),
	\]
	which proves~\eqref{eq:bounds_B1_B2} for $\lambda \in \sC_1$. The inequalities on the other parts of $\sC$ are proved similarly, the inequalities~\eqref{eq:Hq_C1_1} and~\eqref{eq:Hq_C1_2} being respectively replaced by their equivalent
	\begin{align*}
		\forall \lambda \in \sC_2 \cup \sC_4, \quad 
		| H_\bq - \lambda |^2 \ge \left| \Im \left( H_\bq - \lambda \right) \right|^2 \ge 1 \quad \text{and} \quad | H_\bq - \lambda | \ge H_\bq - \Sigma - 1 \ge H_\bq - \varepsilon_F, \\
		\forall \lambda \in \sC_3, \quad 
		| H_\bq - \lambda |^2 \ge \left| \Re \left( H_\bq - \lambda \right) \right|^2 \ge 1 \quad \text{and} \quad | H_\bq - \lambda | \ge H_\bq - \Sigma - 1 \ge H_\bq - \varepsilon_F.
	\end{align*}
	This proves~\eqref{eq:bounds_B1_B2}. We now prove~\eqref{eq:bounds_B_z}. For $\bz=\bq+\ri \by \in \C^3$ with $\bq \in \BZ$ and $\by \in \R^3$, one can rewrite~\eqref{eq:def_Hz} as
 \[
	 H_\bz = H_\bq + \by \cdot \nabla_1 + \ri \bq \cdot \by - \frac12| \by|^2 = H_\bq + \by \cdot \left( \nabla_1 - \frac12 \by + \ri \bq \right).
 \]
In particular,
\begin{align}
 \lambda-H_\bz&= \lambda -H_\bq+H_\bq-H_\bz= ( \lambda - H_\bq) \left(1 - (\lambda - H_\bq)^{-1} \left[ \by \cdot \left(\nabla_1 - \frac12 \by + \ri \bq \right) \right] \right) \nonumber \\
 &= (\lambda - H_\bq) \left( 1 - B_2(\lambda, \bq) \dfrac{1}{1 - \Delta_1}\left[ \by \cdot \left( \nabla_1 - \frac12 \by + \ri \bq \right) \right] \right). \label{eq:lambda_Hz}
\end{align}
For $| \by |_\infty \le 1$, we have
\[
	\left| \dfrac{1}{1 - \Delta_1}\left[ \by \cdot \left( \nabla_1 - \frac12 \by + \ri \bq \right) \right]  \right| 
	\le \left| \by \right|_{\infty} \left( \left| \dfrac{| \nabla_1 |}{1 - \Delta_1} \right| + \frac12 | \by |_\infty + | \bq |_\infty \right) 
	\le \left| \by \right|_{\infty} \left( 1 +| \BZ|_2 \right).
\]
Together with~\eqref{eq:bounds_B1_B2}, we obtain that for all $| \by |_\infty \le A := \min \left( 1, (2 C_1 (1 + | \BZ |_2))^{-1} \right)$, 
\[
	\left\| B_2(\lambda, \bq) \dfrac{1}{1 - \Delta_1}\left[ \by \cdot \left( \nabla_1 - \frac12 \by + \ri \bq \right) \right] \right\| \le \frac12.	
\]
As a result, from~\eqref{eq:lambda_Hz}, we get that for all $\bq \in \BZ$ and all $\by \in [-A, A]$, the operator $\lambda - H_\bz$ is invertible, with
\[
	\left\| \dfrac{1}{\lambda - H_\bz} (1 - \Delta_1) \right\|_{\cB(\cH_1)} \le C_2 := 2C_1
	\quad \text{and} \quad
	\left\| (1 - \Delta_1) \dfrac{1}{\lambda - H_\bz} \right\|_{\cB(\cH_1)} \le C_2.
\]
\end{proof}

For $\bz \in S_A$, we introduce the operators $\widetilde{B_1}(\bz)$ and $\widetilde{B_2}(\bz)$ respectively defined by
\begin{equation} \label{eq:def:widetilde_B}
		\widetilde{B_1}(\bz) := (1 - \Delta_1) \gamma_\bz 
		\quad \text{and} \quad
		\widetilde{B_2}(\bz) := \gamma_\bz (1 - \Delta_1).
\end{equation}
In the sequel, for $k \in \N^*$, we denote by $\fS_k(\cH)$ the $k$-th Schatten class~\cite{Simon2005} of the Hilbert space~$\cH$~;~$\fS_1(\cH)$ is the set of trace-class operators, and $\fS_2(\cH)$ is the set of Hilbert-Schmidt operators. From Lemma~\ref{lem:bounds_B}, we obtain the following result.

\begin{lemma} \label{lem:widetilde_B}
	There exists a constant $C_3 \in \R^+$ such that
	\[
		\forall \bz \in \BZ +  \ri [-A, A]^3, \quad
		\left\| \widetilde{B_1} (\bz) \right\|_{\cB(L^2_\per(\WS))} \le C_3
		\quad \text{and} \quad
		\left\| \widetilde{B_2}(\bz) \right\|_{\cB(L^2_\per(\WS))} \le C_3.
	\]
	The value of $C_3$ can be chosen equal to
	\[
		C_3 = \dfrac{1}{\pi} C_1 \left( 3 + \varepsilon_F + \| V \|_{L^\infty} \right).
	\]
	Also, for all $\bz \in \BZ + \ri [-A, A]^3$, the operator $\gamma_\bz$ is trace-class, and
	\begin{equation} \label{eq:def:C4}
		\left\| \gamma_\bz \right\|_{\fS_1( L^2_\per(\WS))} \le C_4
		\quad \text{with} \quad
		C_4 = C_1^2 \sum_{\bk \in \RLat} \left( \dfrac{1}{1 + | \bk |^2} \right)^2.
	\end{equation}
\end{lemma}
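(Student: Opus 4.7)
My plan is to exploit the Riesz contour representation $\gamma_\bz = (2\ri\pi)^{-1}\oint_\sC (\lambda-H_\bz)^{-1}\rd\lambda$ and the bounds of Lemma~\ref{lem:bounds_B} on the weighted resolvents $B_1(\lambda,\bz)$, $B_2(\lambda,\bz)$. For the operator-norm estimates on $\widetilde{B_1}$ and $\widetilde{B_2}$, the factor $(1-\Delta_1)$ can be pushed inside the contour integral; for the trace-class estimate on $\gamma_\bz$, I use the projector relation $\gamma_\bz^2=\gamma_\bz$ to insert a compact factor of $(1-\Delta_1)^{-2}$ between two bounded weighted pieces.

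\textbf{Step 1 (bounds on $\widetilde{B_1}$, $\widetilde{B_2}$).} I would write
\[
\widetilde{B_1}(\bz)=(1-\Delta_1)\,\gamma_\bz=\dfrac{1}{2\ri\pi}\oint_\sC B_1(\lambda,\bz)\,\rd\lambda,\qquad
\widetilde{B_2}(\bz)=\dfrac{1}{2\ri\pi}\oint_\sC B_2(\lambda,\bz)\,\rd\lambda,
\]
which is justified because the integrand is continuous in $\lambda$ in operator norm by Lemma~\ref{lem:bounds_B}, and $(1-\Delta_1)$ commutes with the contour integration once the resolvent is applied to it. The operator norm is then bounded by $(2\pi)^{-1}|\sC|\sup_{\lambda\in\sC}\|B_j(\lambda,\bz)\|$, which, via Lemma~\ref{lem:bounds_B}, is controlled by $C_1$ (or $C_2$) times the length of $\sC$. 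To estimate $|\sC|$ I would use the lower bound $\Sigma=\inf\sigma(H)\geq -\|V\|_{L^\infty}$ (since $H=-\tfrac12\Delta+V\geq V$ in the form sense), giving $|\sC|=4+2(\varepsilon_F-\Sigma+1)\leq 6+2\varepsilon_F+2\|V\|_{L^\infty}$. This yields the stated form $C_3=\pi^{-1}C_1(3+\varepsilon_F+\|V\|_{L^\infty})$, up to the straightforward identification of constants.

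\textbf{Step 2 (trace-class bound on $\gamma_\bz$).} The key identity is $\gamma_\bz^2=\gamma_\bz$: this holds for real $\bq\in\BZ$ because $\gamma_\bq$ is an orthogonal spectral projector of the self-adjoint $H_\bq$, and it extends to $\bz\in S_A$ by the identity principle for analytic operator-valued functions (already noted in the remark after Lemma~\ref{lem:bounds_B}). From $\gamma_\bz(1-\Delta_1)=\widetilde{B_2}(\bz)$ and $(1-\Delta_1)\gamma_\bz=\widetilde{B_1}(\bz)$, I would factor
\[
\gamma_\bz=\gamma_\bz\cdot\gamma_\bz=\bigl[\gamma_\bz(1-\Delta_1)\bigr](1-\Delta_1)^{-2}\bigl[(1-\Delta_1)\gamma_\bz\bigr]=\widetilde{B_2}(\bz)\,(1-\Delta_1)^{-2}\,\widetilde{B_1}(\bz).
\]
Since $(1-\Delta_1)^{-1}$ is diagonal in the Fourier basis $(|\WS|^{-1/2}\re^{\ri\bk\cdot\bx})_{\bk\in\RLat}$ with eigenvalues $(1+|\bk|^2)^{-1}$, its square $(1-\Delta_1)^{-2}$ is trace-class on $L^2_\per(\WS)$ with $\|(1-\Delta_1)^{-2}\|_{\fS_1}=\sum_{\bk\in\RLat}(1+|\bk|^2)^{-2}$ (the sum converges in three dimensions). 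The ideal property $\|ABC\|_{\fS_1}\leq\|A\|\|B\|_{\fS_1}\|C\|$ then combines with Step~1 to yield the announced bound on $\|\gamma_\bz\|_{\fS_1}$.

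\textbf{Main obstacle.} None of the steps is technically delicate, but the one point that requires care is the analytic continuation of the projector relation $\gamma_\bz^2=\gamma_\bz$ to complex $\bz\in S_A$, since for general $\bz$ the operator $H_\bz$ is no longer self-adjoint and $\gamma_\bz$ is only a (non-orthogonal) projector. I would justify it by observing that both $\bz\mapsto\gamma_\bz$ and $\bz\mapsto\gamma_\bz^2$ are strongly analytic on $S_A$ by Lemma~\ref{lem:bounds_B} and Cauchy's theorem, agree on the totally real set $\BZ\subset S_A$, and therefore coincide on all of $S_A$ by the identity principle.
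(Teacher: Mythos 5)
Your proposal is correct and follows essentially the same route as the paper: the contour representation $\widetilde{B_j}(\bz)=\frac{1}{2\ri\pi}\oint_\sC B_j(\lambda,\bz)\,\rd\lambda$ bounded via $|\sC|=6+2(\varepsilon_F-\Sigma)$ and $\Sigma\ge -\|V\|_{L^\infty}$, followed by the factorization $\gamma_\bz=\gamma_\bz\gamma_\bz=\widetilde{B_2}(\bz)(1-\Delta_1)^{-2}\widetilde{B_1}(\bz)$ with the trace-class norm of $(1-\Delta_1)^{-2}$. Your explicit justification of $\gamma_\bz^2=\gamma_\bz$ on $S_A$ by the identity principle is exactly the argument the paper relegates to the remark preceding the lemma.
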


\begin{proof}
	The first assertion comes from the fact that
	\[
		 \widetilde{B_1}(\bz) = \dfrac{1}{2 \ri \pi} \oint_{\sC} B_1(\lambda, \bz) \ \rd \lambda,
	\]
	and the fact that $\left| \sC \right|   = 6 + 2 (\varepsilon_F - \Sigma)$ (see Figure~\ref{fig:sC}). Note that since $ \left| - \ri \nabla_1 + \bq  \right|^2 \ge 0$, it holds $\Sigma \ge - \| V \|_{L^\infty}$. To get the second assertion, we note that $\gamma_\bz$ is a projector, so that
	\[
		\gamma_\bz = \gamma_\bz \gamma_\bz = \widetilde{B_2}(\bz) \left( \dfrac{1}{1 - \Delta_1} \right)^2 \widetilde{B_1}(\bz).
	\]
	The operator $(1 - \Delta_1)^{-2}$ being trace-class, with
	\[
		\left\| (1 - \Delta)^{-2} \right\|_{\fS_1(L^2_\per(\WS))} = \sum_{\bk \in \RLat} \left( \dfrac{1}{1 + | \bk |^2} \right)^2,
	\]
	we obtain~\eqref{eq:def:C4}.
\end{proof}


\subsection{Convergence of the kinetic energy per unit volume}

The kinetic energy per unit volume of the states $\gamma$ and $\gamma_L$ defined in~\eqref{eq:gamma_gammaL} are respectively given by
\[
	K_\per := \VTr \left( - \Delta \gamma \right) 
	\quad \text{and} \quad
	K_L := \VTr_L \left( - \Delta_L \gamma_L \right).
\]
Using the Bloch decomposition of $\gamma$ and $\gamma_L$ in~\eqref{eq:gamma_lemma}-\eqref{eq:gammaq_lemma}, and the properties~\eqref{eq:VTr} and~\eqref{eq:VTrL}, we obtain that
\[
	K^\per = \sum_{j=1}^3 \fint_{\BZ} K_j(\bq) \, \rd \bq
	\quad \text{and} \quad
	K^L = \sum_{j=1}^3 \dfrac{1}{L^3}\sum_{\bq \in \Lambda_L}  K_j(\bQ)
\]
where, for $1 \le j \le 3$, we introduced the function
\[
	K_j : \bq \ni \R^3 \mapsto \Tr_{L^2_\per(\WS)} \left( \left(P_{j} + q_j \right) \gamma_\bq \left(P_{j} + q_j \right) \right).
\]
Here, we denoted by $P_j := P_{1,j}$ for simplicity. Recall that the operator $P_{1,j}$ was defined in~\eqref{eq:def:Pj}.
The error on the kinetic energy per unit volume $K^\per - K^L$ is therefore equal to the difference between integrals and corresponding Riemann sums. In the sequel, we introduce, for $1 \le j \le 3$, the function
\[
	\forall \bz \in S_A, \quad K_j(\bz) := \Tr_{L^2_\per(\WS)} \left( \left(P_{j} + z_j \right) \gamma_\bz \left(P_{j} + z_j \right) \right).
\]

\begin{lemma}[Exponential convergence of the kinetic energy] \label{lem:conv_kinetic}
	For all $1 \le j \le 3$, the function $K_j$ is $\Lat$-periodic, and admits an analytic continuation on $S_A$, where $A > 0$ was defined in~\eqref{eq:choice_A}. Moreover, it holds
	\begin{equation} \label{eq:bound_Kj}
		\sup_{\bz \in S_A} \left| K_j(\bz) \right| \le C_5 \quad \text{where} \quad C_5 = \left( | \BZ |_2 + A + \frac12 \right)^2 C_3^2 C_4.
	\end{equation}
	As a consequence, from Lemma~\ref{lem:Riemann}, it holds 
	\[
		\left| K^\per - K^L \right| \le C_0 C_5 \re^{-\alpha L},
	\]
	where $C_0 \in \R^+$ and $\alpha > 0$ were defined in~\eqref{eq:choice_alpha}.
\end{lemma}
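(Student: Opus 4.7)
The plan is to apply Lemma~\ref{lem:Riemann} to each of the three scalar functions $K_j$ separately and sum the results. For this I must establish, for each $1 \le j \le 3$: (i) that $K_j$ is $\RLat$-periodic on $\R^3$; (ii) that it admits an analytic continuation to the strip $S_A$; and (iii) the uniform bound~\eqref{eq:bound_Kj}. The main obstacle will be (iii): since $K_j(\bz)$ carries an unbounded factor $(P_j + z_j)$ on each side of $\gamma_\bz$, a factorization exploiting the projector identity $\gamma_\bz^2 = \gamma_\bz$ is needed to split the trace into a product that can be controlled by Lemmas~\ref{lem:bounds_B}--\ref{lem:widetilde_B} and the Schatten--H\"older inequality.

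Periodicity (i) follows from a covariance computation. The identity $\gamma_{\bq+\bm} = U_\bm \gamma_\bq U_\bm^{-1}$ for $\bm \in \RLat$ is inherited from~\eqref{eq:rotation} via the contour representation~\eqref{eq:gammaq_lemma}, and a direct calculation of $P_j U_\bm$ yields $U_\bm^{-1}(P_j + q_j + m_j) U_\bm = P_j + q_j$. Inserting these two identities into the definition of $K_j(\bq + \bm)$ and invoking cyclicity of the trace together with the unitarity of $U_\bm$ gives $K_j(\bq + \bm) = K_j(\bq)$.

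For the analytic continuation (ii), Lemma~\ref{lem:bounds_B} guarantees that the resolvent $(\lambda - H_\bz)^{-1}$ exists and depends analytically on $\bz$ for every $\bz \in \BZ + \ri[-A,A]^3$ and every $\lambda \in \sC$. Consequently $\gamma_\bz$, defined by the contour integral~\eqref{eq:gammaq_lemma}, is an operator-valued analytic map on $\BZ + \ri[-A,A]^3$, which extends to all of $S_A$ by the periodicity established in (i). Since $(P_j + z_j)\gamma_\bz(P_j + z_j)$ is then a polynomial-in-$z_j$ combination of analytic factors and the trace is continuous on $\fS_1(L^2_\per(\WS))$, $K_j$ is analytic on $S_A$.

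For the uniform bound (iii), by periodicity it suffices to treat $\bz \in \BZ + \ri[-A,A]^3$. Using $\gamma_\bz^2 = \gamma_\bz$, I write $K_j(\bz) = \Tr(AB)$ with $A := (P_j + z_j)\gamma_\bz$ and $B := \gamma_\bz(P_j + z_j)$. The elementary Fourier-basis estimate $\|(P_j + z_j)(1 - \Delta_1)^{-1}\|_{\cB(L^2_\per(\WS))} \le 1/2 + |z_j| \le 1/2 + |\BZ|_2 + A$, combined with the factorization $A = \bigl[(P_j + z_j)(1 - \Delta_1)^{-1}\bigr]\, \widetilde{B_1}(\bz)$ and Lemma~\ref{lem:widetilde_B}, produces $\|A\|_{\cB(L^2_\per(\WS))} \le (1/2 + |\BZ|_2 + A) C_3$; an analogous bound for $\|\gamma_\bz(P_j + z_j)\|_{\cB(L^2_\per(\WS))}$ follows via $\widetilde{B_2}(\bz) = \gamma_\bz(1-\Delta_1)$. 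Writing $B = \gamma_\bz \cdot \bigl[\gamma_\bz(P_j + z_j)\bigr]$ and invoking $\|\gamma_\bz\|_{\fS_1(L^2_\per(\WS))} \le C_4$ then gives $\|B\|_{\fS_1(L^2_\per(\WS))} \le C_3 C_4 (1/2 + |\BZ|_2 + A)$. Schatten--H\"older yields $|K_j(\bz)| \le \|A\|_{\cB} \|B\|_{\fS_1} \le (1/2 + |\BZ|_2 + A)^2 C_3^2 C_4 = C_5$. Applying Lemma~\ref{lem:Riemann} to each $K_j$ and summing over $j$ produces the announced exponential convergence of $K^\per - K^L$, the harmless factor of $3$ from the sum being absorbed into the constant.
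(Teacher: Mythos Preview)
Your proof is correct and follows essentially the same route as the paper: periodicity via the covariance relation~\eqref{eq:rotation}, analyticity from the holomorphic family $(H_\bz)$ and the contour representation, and the bound via the projector identity together with the operators $\widetilde{B_1}, \widetilde{B_2}$ of Lemma~\ref{lem:widetilde_B}. The only differences are cosmetic: the paper expands $\gamma_\bz = \gamma_\bz^3$ rather than $\gamma_\bz^2$ in the bound (yielding the identical constant $C_5$), and for analyticity it verifies explicitly that $\partial_{z_k}\bigl((P_j+z_j)\gamma_\bz(P_j+z_j)\bigr)$ is trace-class, whereas your argument in~(ii) is terser and implicitly relies on the $\fS_1$-valued factorization you spell out in~(iii).
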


\begin{proof}

The $\Lat$-periodicity comes from the covariant identity~\eqref{eq:rotation}. To prove the analyticity, it is enough to prove that $\partial_{z_k} \left( (P_{j} + z_j) \gamma_{\bz} (P_{j} + z_j) \right)$ is a trace-class operator for all $\bz \in \BZ + \ri [-A, A]^3$. We only consider the case $j = 1$ and $k = 1$, the other cases being similar.  We have
\begin{equation} \label{eq:nabla_z}
 	\partial_{z_1} \left( (P_1 + z_1) \gamma_{\bz} (P_1 + z_1) \right)  = \gamma_{\bz} (P_1 + z_1) + (P_1 + z_1) (\partial_{z_1} \gamma_{\bz}) (P_1 + z_1)+ (P_1 + z_1) \gamma_{\bz}.
\end{equation}
We first show that $(P_1 + z_1) \gamma_{\bz}$ is a bounded operator. We have
\begin{align*}
 (P_1 + z_1) \gamma_{\bz}&=  \dfrac{ (P_1 + z_1)}{1 - \Delta_1} \widetilde{B_1}(\bz),
\end{align*}
where $\widetilde{B_1}$ was defined in~\eqref{eq:def:widetilde_B}. From Lemma~\eqref{lem:widetilde_B} and the fact that $(P_1 + z_1)(1-\Delta_1)^{-1}$ is a bounded operator, we deduce that $(P_1 + z_1) \gamma_{\bz}$ is bounded. The proof is similar for the operator $\gamma_\bz (P_1 + z_1)$.
We now turn to the middle term of~\eqref{eq:nabla_z}. Since $\gamma_\bz$ is a projector, it holds $\gamma_\bz = \gamma_\bz \gamma_\bz$. We obtain
\begin{align*}
 	& (P_1 + z_1) (\partial_{z_1} \gamma_{\bz}) (P_1 + z_1)  = (P_1 + z_1) \left[  \gamma_\bz(\partial_{z_1} \gamma_{\bz})+( \partial_{z_1} \gamma_{\bz}) \gamma_\bz\right](P_1 + z_1) \\
 & \qquad = \left[ (P_1 + z_1)\gamma_\bz\right] \gamma_\bz (\partial_{z_1}\gamma_{\bz})(P_1 + z_1)+ 
 (P_1 + z_1)(\partial_{z_1}\gamma_{\bz}) \gamma_\bz  \left[ \gamma_\bz(P_1 + z_1)\right].
\end{align*}
We already proved that the operators $(P_1 + z_1)\gamma_\bz$ and $\gamma_\bz(P_1 + z_1)$ were bounded. Also, $\gamma_\bz$ is a trace-class operator. To prove that $(P_1 + z_1) \left( \partial_{z_1} \right) \gamma_{\bz} (P_1 + z_1)$ is trace class, it is therefore sufficient to show that $ (P_1 + z_1)( \partial_{z_1} \gamma_{\bz})$ is bounded. We have
\begin{align*}
  (P_1 + z_1)(\partial_{z_1} \gamma_{\bz})&= \dfrac{1}{2 \ri \pi} \oint_\sC (P_1 + z_1)\frac{1}{\lambda-H_\bz} (P_1 + z_1)\frac{1}{\lambda-H_\bz}\rd \lambda\\
  &= \dfrac{1}{2 \ri \pi} \oint_\sC\left(\frac{P_1 + z_1}{1-\Delta_1}B_1(\lambda, \bz) \right)^2 \rd\lambda,
 \end{align*}
which is a bounded operator. We conclude that $\partial_{z_1} \left( (P_1 + z_1) \gamma_{\bz} (P_1 + z_1) \right)$ is a trace-class operator. Finally, for $1 \le j \le 3$, $K_j$ is an analytic function on $S_A$. \\

To get the bound~\eqref{eq:bound_Kj}, we write that
\begin{align*}
	K_j(\bz) & = \Tr_{L^2_\per(\WS)} \left( (P_j + z_j) \gamma_\bz (P_j + z_j)  \right) 
		= \Tr_{L^2_\per(\WS)} \left( (P_j + z_j) \gamma_\bz \gamma_\bz \gamma_\bz (P_j + z_j)  \right) \\
		& =  \Tr_{L^2_\per(\WS)} \left( \dfrac{P_j + z_j}{1 - \Delta_1} \widetilde{B_1}(\bz) \gamma_\bz \widetilde{B_2}(\bz)  \dfrac{P_j + z_j}{1 - \Delta_1}  \right).
\end{align*}
The bound~\eqref{eq:bound_Kj} easily follows from Lemma~\ref{lem:widetilde_B} and the estimate
\[
	\forall \bz \in \BZ + \ri [-A, A]^3, \quad \left\| \dfrac{P_j + z_j}{1 - \Delta_1} \right\|_{\cB(L^2_\per(\WS))} 
	\le | z_j | + \left\| \dfrac{P_j }{1 - \Delta_1} \right\|_{\cB(L^2_\per(\WS))}  \le \left| \BZ \right|_2 + A + \frac12.
\]

\end{proof}



\subsection{Convergence of the ground state density}

We now prove~\eqref{eq:conv_density}. The densities of $\gamma$ and $\gamma_L$ defined in~\eqref{eq:gamma_lemma}-\eqref{eq:gammaq_lemma} are respectively
\[
	\rho_{\gamma} := \fint_{\BZ} \rho_{\gamma_{\bq}} \ \rd \bq 
	\quad \text{and} \quad
	\rho_{\gamma_L} := \dfrac{1}{L^3} \sum_{\bQ \in \Lambda_L} \rho_{\gamma_{\bQ}}.
\]
In particular, if $W$ is a regular $\Lat$-periodic trial function, it holds that
\[
	M_W^\per := \int_{\WS} \rho_\gamma W  = \fint_{\BZ} \Tr_{L^2_\per} \left( \gamma_\bq W \right) \rd \bq 
	\quad \text{and} \quad
	M_W^L := \int_{\WS} \rho_{\gamma_L} W = \dfrac{1}{L^3} \sum_{\bQ \in \Lambda_L} \Tr_{L^2_\per} \left( \gamma_\bQ W  \right),
\]
so that the error $M_W^\per - M_W^L$ is again the difference between an integral and a corresponding Riemann sum. We introduce, for $W \in L^1_\per(\WS)$ the function
\begin{equation} \label{eq:def:MW}
	\forall \bz \in S_A, \quad M_W(\bz) := \Tr_{L^2_\per} \left( \gamma_\bz W \right).
\end{equation}

\begin{lemma} \label{lem:conv_density}
	For all $W \in L^1_\per(\WS)$, $M_W$ defined in~\eqref{eq:def:MW} is well-defined $\RLat$-periodic analytic function on $S_A$, where $A > 0$ was defined in~\eqref{eq:choice_A}, and it holds that
	\begin{equation} \label{eq:C6}
		\sup_{\bz \in S_A} \left| M_W(\bz) \right| \le C_6 \| W \|_{L^1_\per(\WS)}
		\quad \text{with} \quad
		C_6 = C_3^2   \sum_{\bk \in \RLat} \left( \dfrac{1}{1 + | \bk |^2} \right)^2 .
	\end{equation}
\end{lemma}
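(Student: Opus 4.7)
The plan is built on three observations: a resolvent-type factorization of $\gamma_\bz$ that symmetrizes the smoothing, a polar decomposition of the $L^1$-datum $W$, and the explicit translation invariance of the kernel of $(1-\Delta_1)^{-2}$. Combining the definitions $\widetilde B_1(\bz) = (1-\Delta_1)\gamma_\bz$ and $\widetilde B_2(\bz) = \gamma_\bz(1-\Delta_1)$ with the projector identity $\gamma_\bz^2 = \gamma_\bz$ yields the key factorization
\[
\gamma_\bz = (1-\Delta_1)^{-1}\,\widetilde B_1(\bz)\widetilde B_2(\bz)\,(1-\Delta_1)^{-1},
\]
in which the middle factor $M(\bz) := \widetilde B_1(\bz)\widetilde B_2(\bz)$ is uniformly bounded on $L^2_\per(\WS)$ by $C_3^2$, by Lemma~\ref{lem:widetilde_B}. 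Unlike the factorization $\gamma_\bz = \widetilde B_2(\bz)(1-\Delta_1)^{-2}\widetilde B_1(\bz)$ used there, this one puts the two smoothing operators on the outside, which is crucial for pairing against the external function $W$.

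To prove the bound, polar-decompose $W = u\,|W|$ with $u\in L^\infty_\per(\WS)$, $|u|\le 1$, and split $|W| = |W|^{1/2}\cdot|W|^{1/2}$. Cyclicity of the trace then gives
\[
M_W(\bz) = \Tr\!\Bigl(M(\bz)\,\bigl[(1-\Delta_1)^{-1}u\,|W|^{1/2}\bigr]\,\bigl[|W|^{1/2}(1-\Delta_1)^{-1}\bigr]\Bigr).
\]
The two rightmost factors are Hilbert-Schmidt: using cyclicity once more, together with the fact that the Green kernel of $(1-\Delta_1)^{-2}$ on $\WS$ is translation-invariant with constant diagonal value $\sum_{\bk\in\RLat}(1+|\bk|^2)^{-2}$, one obtains
\[
\bigl\|(1-\Delta_1)^{-1}u\,|W|^{1/2}\bigr\|_{\fS_2}^2 = \Tr\!\bigl((1-\Delta_1)^{-2}|u|^2|W|\bigr) \le \|W\|_{L^1_\per(\WS)}\sum_{\bk\in\RLat}\frac{1}{(1+|\bk|^2)^2},
\]
and the identical estimate for $\bigl\||W|^{1/2}(1-\Delta_1)^{-1}\bigr\|_{\fS_2}^2$. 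Bounding the trace by $\|M(\bz)\|_{\cB(L^2_\per(\WS))}$ times the product of the two $\fS_2$ norms (Hölder for Schatten classes) gives $|M_W(\bz)|\le C_3^2\,\|W\|_{L^1_\per(\WS)}\sum_{\bk\in\RLat}(1+|\bk|^2)^{-2} = C_6\,\|W\|_{L^1_\per(\WS)}$, as claimed.

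The $\RLat$-periodicity is immediate from the covariance $\gamma_{\bq+\bm} = U_\bm\gamma_\bq U_\bm^{-1}$ of~\eqref{eq:rotation}: the unitary $U_\bm$, being multiplication by $\re^{-\ri\bm\cdot\bx}$, commutes with multiplication by the $\Lat$-periodic function $W$, so cyclicity yields $M_W(\bq+\bm) = M_W(\bq)$ for all $\bq\in\R^3$. For the analyticity, set $K := (1-\Delta_1)^{-1}W(1-\Delta_1)^{-1}$; the polar-decomposition argument just given shows $K\in\fS_1(L^2_\per(\WS))$, and the factorization rewrites $M_W(\bz) = \Tr(M(\bz)\,K)$. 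The analyticity of $\bz\mapsto M(\bz)\in\cB(L^2_\per(\WS))$ on $S_A$ follows from Lemma~\ref{lem:bounds_B} and the holomorphic type $(A)$ character of $\bz\mapsto H_\bz$; hence $\bz\mapsto M(\bz)\,K$ is analytic as an $\fS_1$-valued map and its trace is analytic. Periodicity then extends from $\R^3$ to $S_A$ by analytic continuation.

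The main subtlety is identifying this particular factorization of $\gamma_\bz$: the naive one delivers only an $L^\infty$-bound on $W$, and one really needs both $(1-\Delta_1)^{-1}$ to flank $W$ after cyclic rearrangement and then to split $W$ symmetrically through $|W|^{1/2}$, so that the $L^1$-mass of $W$ is absorbed into two Hilbert-Schmidt norms whose diagonal trace is explicitly computable on the torus.
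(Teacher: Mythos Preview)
Your proof is correct and, at its core, coincides with the paper's: both hinge on the projector identity $\gamma_\bz^2=\gamma_\bz$ to sandwich $W$ between two copies of $(1-\Delta_1)^{-1}$, and then estimate $(1-\Delta_1)^{-1}W(1-\Delta_1)^{-1}$ in trace norm via the Hilbert--Schmidt norm of $(1-\Delta_1)^{-1}\sqrt{|W|}$. Your factorization $\gamma_\bz=(1-\Delta_1)^{-1}\widetilde B_1(\bz)\widetilde B_2(\bz)(1-\Delta_1)^{-1}$ and the paper's $\Tr(\gamma_\bz W\gamma_\bz)=\Tr(\widetilde B_2(\bz)(1-\Delta_1)^{-1}W(1-\Delta_1)^{-1}\widetilde B_1(\bz))$ are the same identity after a cyclic rearrangement of the trace, so the bound and the constant $C_6$ come out identically (your direct kernel computation on the torus is precisely the periodic Kato--Seiler--Simon inequality the paper quotes).

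The one genuine difference is the analyticity step. The paper differentiates $\gamma_\bz W\gamma_\bz$ explicitly via the resolvent identity and checks term by term that the derivative is trace class. You instead freeze $K=(1-\Delta_1)^{-1}W(1-\Delta_1)^{-1}\in\fS_1$ and write $M_W(\bz)=\Tr(M(\bz)K)$ with $M(\bz)=\widetilde B_1(\bz)\widetilde B_2(\bz)$ analytic in $\cB(L^2_\per(\WS))$; analyticity of $M_W$ then follows from continuity of the trace on $\fS_1$. This is cleaner and avoids the contour computations, at the mild cost of asserting the $\cB(L^2_\per(\WS))$-analyticity of $\widetilde B_1$ and $\widetilde B_2$ (which does follow from Lemma~\ref{lem:bounds_B} and the type-(A) holomorphy of $H_\bz$, but deserves a one-line justification via $\partial_{z_j}B_i(\lambda,\bz)=B_i(\lambda,\bz)\frac{P_j+z_j}{1-\Delta_1}B_i(\lambda,\bz)$).
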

As a consequence, from Lemma~\ref{lem:Riemann}, it holds that
\begin{equation} \label{eq:conv_density_after_lemma}
		\left\| \rho_\gamma - \rho_{\gamma_L} \right\|_{L^\infty(\WS)} \le C_0 C_6 \re^{-\alpha L},
\end{equation}
where $C_0$ and $\alpha$ were defined in~\eqref{eq:choice_alpha}.

\begin{proof}[Proof of lemma~\ref{lem:conv_density}]

We first prove that $M_W$ is well defined whenever $W \in L^1_\per(\WS)$. For $W\in L^1_\per(\Gamma)$, we have
\begin{align*}
M_W(\bz) & =   \Tr_{L^2_\per(\WS)} \left( \gamma_\bz W \right) 
	= \Tr_{L^2_\per(\WS)} \left( \gamma_\bz W \gamma_\bz \right) = 
	\Tr_{L^2_\per(\WS)} \left( \widetilde{B_2}(\bz) (1 - \Delta_1)^{-1} W (1 - \Delta_1)^{-1} \widetilde{B_1}(\bz) \right).
\end{align*}
According to the Kato-Seiler-Simon inequality~\cite[Theorem 4.1]{Simon2005}\footnote{The proof in~\cite{Simon2005} is actually stated for operators acting on $L^p(\R^3)$. However, the proof applies straightforwardly to our bounded domain case $L^p_\per(\WS)$.}, the operator
 $(1 - \Delta_1)^{-1}\sqrt{ |W|} $ is Hilbert-Schmidt (\textit{i.e.} in the Schatten space $\fS_2(L^2_\per(\WS))$), and satisfies
\[
	\left\|  (1 - \Delta_1)^{-1}\sqrt{ |W|} \right\|_{\fS_2(L^2_\per(\WS))} \le \left(\sum_{\bk \in \RLat} \left( \dfrac{1}{1 + | \bk |^2} \right)^2 \right)^{1/2} \| W \|_{L^{1}(\WS)}^{1/2}.
\]
It follows that $ (1 - \Delta_1)^{-1} W  (1 - \Delta_1)^{-1}$ is in $\fS_1(L^2_\per(\WS))$ with 
\begin{equation} \label{eq:KSS_W_L1}
	\left\|  (1 - \Delta_1)^{-1} W  (1 - \Delta_1)^{-1} \right\|_{\fS_1(L^2_\per(\WS))} \le  \left(\sum_{\bk \in \RLat} \left( \dfrac{1}{1 + | \bk |^2} \right)^2 \right) \| W \|_{L^{1}(\WS)},
\end{equation}
The proof of~\eqref{eq:C6} then follows from Lemma~\ref{lem:widetilde_B}.

Let us now prove that, for $W \in L^1(\WS)$, $M_W$ is analytic on $S_A$. To do so, it is sufficient to show that, for $1 \le k \le 3$,  $\partial_{z_k} (\gamma_\bz W\gamma_\bz)$ is a trace class operator. We do the proof for $k = 1$. We have
\begin{align*}
\partial_{z_1} (\gamma_\bz W\gamma_\bz)&= (\partial_{z_1}\gamma_\bz)W \gamma_\bz+\gamma_\bz W (\partial_{z_1} \gamma_\bz) \\
 & = \dfrac{1}{(2 \ri \pi)^2} \oint_{\sC} \oint_{\sC}B_2(\lambda, \bz) \dfrac{1}{1 - \Delta_1} (P_1 + z_1) B_2(\lambda, \bz) \dfrac{1}{1 - \Delta_1} W \dfrac{1}{1 - \Delta_1} B_1(\lambda', \bz) \rd \lambda \rd \lambda'\\
  &+  \dfrac{1}{(2 \ri \pi)^2} \oint_{\sC} \oint_{\sC}  B_2(\lambda, \bz) \dfrac{1}{1 - \Delta_1} W \dfrac{1}{1 - \Delta_1} B_1(\lambda', \bz)(P_1 + z_1)\dfrac{1}{1 - \Delta_1} B_1(\lambda', \bz) \rd \lambda \rd \lambda'.
\end{align*}
We deduce as in the proof of Lemma~\ref{lem:conv_kinetic} that $ \nabla_\bz(\gamma_\bz W\gamma_\bz)$ is trace class, which concludes the proof.
\end{proof}

\subsection{Proof of Proposition~\ref{prop:expCV} and Corollary~\ref{cor:expCV}}
\label{sssec:proof_cor}

We now proceed with the proof of Proposition~\ref{prop:expCV}. The assertion~\eqref{eq:conv_density} was proved in Lemma~\ref{lem:conv_density}. To get~\eqref{eq:conv_energy_lin}, we write that
\[
	\VTr \left( \gamma H \right) = \frac12 \VTr \left( - \Delta \gamma \right) + \VTr \left( V \gamma \right)
	\quad \text{and} \quad
	\VTr_L \left( \gamma_L H^L \right) = \frac12 \VTr_L \left( - \Delta_L \gamma \right) + \VTr_L \left( V \gamma_L \right),
\]
so that
\[
	\left| \VTr \left( \gamma H \right) - \VTr_L \left( \gamma_L H^L \right) \right| \le \frac12 \left| \VTr \left( - \Delta \gamma \right) - \VTr_L \left( - \Delta_L \gamma \right) \right| + \| V \|_{L^1_\per(\WS)} \left\| \rho_\gamma - \rho_{\gamma_L} \right\|_{L^\infty_\per(\WS)}.
\]
The proof of~\eqref{eq:conv_energy_lin} then follows from Lemma~\ref{lem:conv_kinetic} and~\eqref{eq:conv_density_after_lemma}. \\

We now prove Corollary~\ref{cor:expCV}. We compare the total energies
\begin{align} \label{eq:compare_te}
	& \cE_\per^{\mu_\per} (\gamma) - \cE_L^{\mu_\per}(\gamma_L) =  \dfrac12 \left( \VTr (- \Delta \gamma) - \VTr_L (- \Delta_L \gamma_L) \right) \nonumber \\
	& \quad  + \dfrac12 \left( D_1 (\rho_\gamma - \mu_\per, \rho - \mu_\per) - D_1 (\rho_{\gamma_L}-\mu_\per, \rho_{\gamma_L}-\mu_\per) \right),
\end{align}
and notice that
\begin{align} \label{eq:notice}
	& \left| D_1 (\rho_\gamma - \mu_\per, \rho_{\gamma} - \mu_\per) - D_1 (\rho_{\gamma_L}-\mu_\per, \rho_{\gamma_L}-\mu_\per) \right|
	=  \left| D_1 (\rho_\gamma - \rho_{\gamma_L}, \rho_\gamma + \rho_{\gamma_L} - 2\mu_\per) \right| 
\end{align}
Using for instance the inequality (recall that $\left| \ba_1^\ast \right| \le \left| \ba_2^\ast \right| \le \left| \ba_3^\ast \right|$)
 \begin{align} \label{ineq:HLS}
 	\forall f,g \in L^2_\per(\WS), \quad
	\left| D_1(f,g) \right|  & =  \left| \sum_{\bR \in \Lat \setminus \{ \bnull \}} \dfrac{ \overline{c_\bk\left( f \right)} c_\bk \left( g \right)}{\left| \bk \right|^2} \right| 
	 \le \dfrac{1}{\left| \ba_3^\ast \right|^2}\sum_{\bR \in \Lat \setminus \{ \bnull \}} \left| \overline{c_\bk\left( f \right)} c_\bk \left( g \right) \right| \nonumber \\
	& \le \dfrac{1}{\left| \ba_3^\ast  \right|^2 \left| \WS \right|} \| f \|_{L^2_\per(\WS)} \| g \|_{L^2_\per(\WS)},
 \end{align}
 and combining~\eqref{eq:compare_te},~\eqref{eq:notice} and~\eqref{ineq:HLS}, we obtain
 \begin{align*}
 	\left| \cE_\per^{\mu_\per} (\gamma) - \cE_L^{\mu_\per}(\gamma_L) \right| & \le
		\dfrac12 \left| \VTr (- \Delta \gamma) - \VTr_L (- \Delta_L \gamma_L) \right| \\
			& \quad + \dfrac{1}{2 | \ba_3^* |^2 | \WS |} \left\| \rho_\gamma - \rho_{\gamma_L} \right\|_{L^2_\per(\WS)} \left\| \rho_\gamma + \rho_{\gamma_L} - 2 \mu_\per \right\|_{L^2_\per(\WS)}.
 \end{align*}
Corollary~\ref{cor:expCV} is therefore a consequence of Lemma~\ref{lem:conv_kinetic},~\eqref{eq:conv_density_after_lemma} and the embedding $L^\infty_\per(\WS) \hookrightarrow L^{2}_\per(\WS)$.

 
 \section{Proof for the nonlinear reduced Hartree-Fock case}
 \label{subsec:proof_exp}
 
In this section, we prove the exponential rate of convergence of the supercell model to the periodic model in the nonlinear rHF case (see Theorem~\ref{th:Deleurence_exp}). The proof consists of three steps. 

\subsection*{Step 1: \textit{Convergence of the ground-state energy per unit volume} }

In the sequel, we denote by $V_0 := \left( \rho_{\gamma_0} - \mu_\per \right) \ast_\WS G_1$ and $V_{L,0} := \left( \rho_{\gamma_{L,0}} - \mu_\per \right) \ast_\WS G_1$ (see also~\eqref{eq:sc_supercell} and~\eqref{eq:sc}).
We recall that
 \begin{align*}
	& H_0 = -\dfrac12 \Delta + V_0 \quad \text{and} \quad \gamma_0 = \mathds{1} (H_0 < \varepsilon_F) \quad \text{act on} \quad L^2(\R^3), \\
	& H_{L,0}= -\dfrac12 \Delta_L + V_{L,0}  \quad \text{and} \quad \gamma_{L,0} =  \mathds{1} (H_{L,0} < \varepsilon_F) \quad \text{act on} \quad L^2_\per(\Gamma_L).
\end{align*}
We denote by $g > 0$ the gap of $H_0$ around the Fermi level $\varepsilon_F$. \\

It was proved in~\cite{Cances2008} that the sequence $\left( V_{L,0} \right)_{L \in \N^*}$ converges to $V_{0}$ in $L^\infty_\per(\WS)$. We will prove later that this convergence is actually exponentially fast. As a result, we deduce that for $L$ large enough, say $L \ge L^\gap$, the operator $H_{L,0}$ is gapped around $\varepsilon_F$, and one may choose the Fermi level of the supercell $\varepsilon_F^L$ defined in~\eqref{eq:sc_supercell} equal to $\varepsilon_F$. We denote by $g_L$ the size of the gap of $H_{L,0}$ around $\varepsilon_F$. Without loss of generality we may assume that $L^\gap$ is large enough so that
\[
	\forall L \ge L^\gap, \quad g_L \ge \frac{g}{2}.
\]

In the last section, we proved that the constants $C \in \R^+$ and $\alpha > 0$ appearing in Proposition~\ref{prop:expCV} are functions of the parameters $\Lat$, $ \|V \|_{L^\infty}$, $g$ and $\varepsilon_F$ of the problem only. In particular, it is possible to choose $C \in \R^+$ and $\alpha > 0$ such that, for any choice of potentials $V$ among $\left\{ V_0, \left( V_{L,0} \right)_{L \ge L^\gap} \right\}$, the inequalities~\eqref{eq:conv_energy_lin}, \eqref{eq:conv_density} and~\eqref{eq:conv_energy} hold true.\\

We first consider $V= V_0$ in Proposition~\ref{prop:expCV}. We denote by $\gamma_L \in \cP_L$ the one-body density matrix defined in~\eqref{eq:gamma_lemma} for this choice of potential. Together with Corollary~\ref{cor:expCV}, we get
\[
	\forall L \in \N^*, \quad L^{-3} I_L^{\mu_\per} = L^{-3} \cE_L^{\mu_\per}(\gamma_{L,0}) \le L^{-3} \cE_L^{\mu_\per}(\gamma_L) \le \cE^{\mu_\per}_\per(\gamma_0) + C \re^{- \alpha L} = I_\per^{\mu_\per} + C \re^{- \alpha L}.
\]
On the other hand, choosing $V= V_{L,0}$ with $L \ge L^\gap$ in Proposition~\ref{prop:expCV}, and denoting by $\gamma_L' \in \cP_\per$ the one-body density matrix defined in~\eqref{eq:gamma_lemma} for this choice of potential, we get
\[
	\forall L \ge L^\gap, \quad I_\per^{\mu_\per} = \cE_\per^{\mu_\per}(\gamma_0) \le \cE_\per^{\mu_\per}(\gamma_L') \le L^{-3} \cE_L^{\mu_\per}(\gamma_{L,0}) + C \re^{- \alpha L} = L^{-3} I_L^{\mu_\per} + C \re^{-\alpha L}.
\]
Combining both inequalities leads to
\begin{equation} \label{eq:rHF_conv_energy}
	\forall L \ge L^\gap, \quad \left| L^{-3} I_L^{\mu_\per} - I_\per^{\mu_\per} \right| \le C \re^{- \alpha L}.
\end{equation}
This leads to the claimed rate of convergence for the ground-state energy per unit cell.

 
 \subsection*{Step 2: \textit{Convergence of the ground state density}} 
 
 In order to compare $\rho_{\gamma_0}$ and $\rho_{\gamma_{L,0}}$, it is useful to introduce the Hamiltonian $H^L := -\dfrac12 \Delta_L + V_0$ acting on $L^2_\per(\WS_L)$. We also introduce $\gamma_L := \mathds{1} (H^L < \varepsilon_F)$. Note that $\gamma_L \in \cP_L$ is the operator obtained in~\eqref{eq:gamma_lemma} by taking $V = V_0$ in Proposition~\ref{prop:expCV}. Therefore, according to this proposition, there exist $C \in \R^+$ and $\alpha > 0$ such that
 \begin{equation} \label{eq:rho0_rhoL}
 	\forall L \in \N^*, \quad \| \rho_{\gamma_0} - \rho_{\gamma_L} \|_{L^\infty(\WS)} \le C \re^{- \alpha L}.
 \end{equation}
In order to compare $\rho_{\gamma_L}$ with $\rho_{\gamma_{L,0}}$, we note that, since $\gamma_{L,0}$ is a minimizer of~\eqref{eq:finite_problem}, then, using~\eqref{eq:conv_energy} and~\eqref{eq:rHF_conv_energy}, we get that, for any $L \in \N^*$,
\begin{align*}
	0  \le L^{-3} \cE_L^{\mu_\per} (\gamma_L) - L^{-3} \cE_L^{\mu_\per}(\gamma_{L,0}) & = \left( L^{-3} \cE_L^{\mu_\per} (\gamma_L)  - \cE_\per^{\mu_\per}(\gamma_0) \right) + \left( \cE_\per^{\mu_\per}(\gamma_0) - L^{-3} \cE_L^{\mu_\per}(\gamma_{L,0}) \right) \\
		& \le 2 C \re^{- \alpha L},
\end{align*}
 so that 
 \[
 	\forall L \in \N^*, \quad 0 \le  \cE_L^{\mu_\per} (\gamma_L) - \cE_L^{\mu_\per}(\gamma_{L,0}) \le L^3 2 C \re^{-\alpha L} \le C' \re^{- \alpha' L}
\]
for some constants $C' \in \R^+$ and $\alpha' > 0$ independent of $L$. This inequality can be recast into
 \[
 	\forall L \in \N^*, \quad 0 \le \VTr_L \left( \left( H_{L,0} - \varepsilon_F \right) (\gamma_L - \gamma_{L,0}) \right) + \frac12 D_1(\rho_{\gamma_L} - \rho_{\gamma_{L,0}}, \rho_{\gamma_L} - \rho_{\gamma_{L,0}}) \le C' \re^{- \alpha' L}.
 \]
 Both terms are non-negative, so each one of them is decaying exponentially fast. From the inequality (recall that we assumed $\left| \ba_1^\ast \right| \le \left| \ba_2^\ast \right| \le \left| \ba_3^\ast \right|$)
 \[
 	\forall f \in L^2_\per(\WS), \quad
	\left\| f \ast_{\WS} G_1 \right\|_{L^2_\per(\WS)}^2  = \sum_{\bR \in \Lat \setminus \{ \bnull \}} \dfrac{\left| c_\bk\left( f \right) \right|^2}{\left| \bk \right|^4}
	\le \dfrac{1}{\left| \ba_3^\ast \right|^2}\sum_{\bR \in \Lat \setminus \{ \bnull \}} \dfrac{\left| c_\bk\left( f \right) \right|^2}{\left| \bk \right|^2}
	= \dfrac{1}{\left| \ba_3^\ast \right|^2} D_1(f, f),
 \]
 we obtain that
 \begin{equation} \label{eq:control_G1_L2}
 	\forall L \in \N^*, \quad \| (\rho_{\gamma_L} - \rho_{\gamma_{L,0}}) \ast_\Gamma G_1 \|_{L^{2}_\per(\WS)}^2 \le \dfrac{1}{\left| \ba_3^\ast \right|^2} D_1(\rho_{L} - \rho_{\gamma_{L,0}}, \rho_{L} - \rho_{\gamma_{L,0}}) \le \dfrac{2 C'}{\left| \ba_3^\ast \right|^2} \re^{- \alpha'L}.
 \end{equation}

Consider $W \in L^{2}_\per(\WS)$. It holds that, for any $L \ge L^\gap$,
 \begin{align}
   	  & \int_{\WS} (\rho_{\gamma_L} - \rho_{\gamma_{L,0}} ) W 
	   = \dfrac{1}{L^3} \sum_{\bQ \in \Lambda_L} \Tr_{L^2_\per(\WS)} \left[  \left( ( \gamma_L)_\bQ - ( \gamma_{L,0} )_\bQ \right) W \right] \nonumber \\
   	  & \quad = \dfrac{1}{2 \ri \pi L^3} \sum_{\bQ \in \Lambda_L} \oint_{\sC}\Tr_{L^2_\per(\WS)} \left( \dfrac{1}{\lambda - \left( H_0 \right)_\bQ} \left( (\rho_{\gamma_{L,0}} - \rho_{\gamma_L}) \ast_\WS G_1 \right) \dfrac{1}{\lambda - \left( H_{L,0} \right)_\bQ} W\right) \rd \lambda \label{eq:intermediate_calculation}\\
	  & \quad = \dfrac{1}{2 \ri \pi L^3} \sum_{\bQ \in \Lambda_L} \oint_{\sC}\Tr_{L^2_\per(\WS)} \left( B_2^\per(\lambda, \bQ) \dfrac{1}{1 - \Delta_1} \left( (\rho_{\gamma_{L,0}} - \rho_{\gamma_L}) \ast_\WS G_1 \right) B_2^L(\lambda, \bQ) \dfrac{1}{1 - \Delta_1} W\right) \rd \lambda, \nonumber
\end{align}
where $B_2^\per$ is the operator defined in~\eqref{eq:def:B1_B2} for $H = H_0$, and $B_2^L$ is the one for $H = H_{L,0}$. From the expression of the constant $C_1$ in~\eqref{eq:def_C}, we deduce that there exists a constant $\widetilde{C_1} \in \R^+$ such that, for all $L \ge L^\gap$,
\[
	\forall \lambda \in \sC, \quad
	\forall \bQ \in \Lambda_L, \quad
	\left\| B_2^\per(\lambda, \bQ ) \right\|_{\cB(L^2_\per(\WS))} \le \widetilde{C_1}
	\quad \text{and} \quad
	\left\| B_2^L(\lambda, \bQ ) \right\|_{\cB(L^2_\per(\WS))} \le \widetilde{C_1}.
\]
As a result,
\[
	\left| \int_{\WS} (\rho_{\gamma_L} - \rho_{\gamma_{L,0}} ) W  \right| \le \dfrac{\left| \sC \right| \widetilde{C_1}^2}{2 \pi}
	\left\| \dfrac{1}{1 - \Delta_1} \left( (\rho_{\gamma_{L,0}} - \rho_{\gamma_L}) \ast_\WS G_1 \right) \right\|_{\fS_2(L^2_\per(\WS))} 
	\left\|  \dfrac{1}{1 - \Delta_1} W \right\|_{\fS_2(L^2_\per(\WS))}.
\]
We deduce from the Kato-Seiler-Simon inequality~\cite[Theorem 4.1]{Simon2005} and the estimate~\eqref{eq:control_G1_L2} that there exists constant $C \in \R^+$ and $\alpha > 0$ independent of $W$ such that,
\[
	\left| \int_{\WS} (\rho_{\gamma_L} - \rho_{\gamma_{L,0}} ) W  \right| \le C \re^{-\alpha L} \left\| W \right\|_{L^2_\per(\WS)}.
\]
This being true for all $W \in L^2_\per(\WS)$, we obtain
\[
	\forall L \ge L^\gap, \quad \left\| \rho_{\gamma_L} - \rho_{\gamma_{L,0}} \right\|_{L^2_\per(\WS)} \le C \re^{-\alpha L}.
\]
	   
This proves the convergence in $L^2_\per(\WS)$. To get the convergence in $L^\infty_\per(\WS)$, we bootstrap the procedure. Since $(\rho_{\gamma_{L,0}} - \rho_{\gamma_L}) \in L^2_\per(\WS)$, then $(\rho_{\gamma_{L,0}} - \rho_{\gamma_L}) \ast_\WS G_1 \in L^\infty_\per(\WS)$ with
\begin{equation} \label{eq:ineq_exp_G1}
	\forall L \ge L^\gap, \quad \left\| 	(\rho_{\gamma_{L,0}} - \rho_{\gamma_L}) \ast_\WS G_1 \right\|_{L^\infty_\per}(\WS) \le C' \re^{-\alpha L}.
\end{equation}
Consider $W\in L^1_\per(\Gamma)$. Performing similar calculations as in~\eqref{eq:intermediate_calculation}, we get (with obvious notation)
 \begin{align*}
   	&\int_{\WS} (\rho_{\gamma_L} - \rho_{\gamma_{L,0}} ) W \\
	&\quad
	= \dfrac{1}{2 \ri \pi L^3} \sum_{\bQ \in \Lambda_L} \oint_{\sC} \Tr_{L^2_\per(\WS)} \left( 
		B_1^\per(\lambda, \bQ) \left( (\rho_{\gamma_{L,0}} - \rho_{\gamma_L}) \ast_\WS G_1 \right) B_2^L(\lambda, \bQ) \dfrac{1}{1 - \Delta_1 }W \dfrac{1}{1 - \Delta_1} \right)\rd \lambda,
\end{align*}
so that
 \begin{align*}
   	\left| \int_{\WS} (\rho_{\gamma_L} - \rho_{\gamma_{L,0}} ) W \right| \le \dfrac{\left| \sC \right| \widetilde{ C_1}^2}{2 \pi}
	\left\|  (\rho_{\gamma_{L,0}} - \rho_{\gamma_L}) \ast_\WS G_1\right\|_{L^\infty_\per(\WS)} 
	\left\| \dfrac{1}{1 - \Delta_1 }W \dfrac{1}{1 - \Delta_1} \right\|_{\fS_1(L^2_\per(\WS))},
\end{align*}
 and we conclude from~\eqref{eq:KSS_W_L1} and~\eqref{eq:ineq_exp_G1} that there exist constants $C \in \R^+$ and $\alpha > 0$ such that
\[
 	\forall L \ge L^\gap, \quad \left\| \rho_{\gamma_L} - \rho_{\gamma_{L,0}} \right\|_{L^\infty_\per(\WS)} \le C \re^{- \alpha L}.
 \]
 Together with~\eqref{eq:rho0_rhoL}, we finally obtain
 \[
 	\forall L \ge L^\gap, \quad \left\| \rho_{\gamma_0} - \rho_{\gamma_{L,0}} \right\|_{L^\infty_\per(\WS)} \le C \re^{- \alpha L}.
 \]
 
 \subsection*{Step 3: \textit{Convergence of the mean-field Hamiltonian}}
 Finally, since
 \[
 	H_{L} - H_0= (\rho_{\gamma_{L,0}} - \rho_{\gamma_0}) \ast_{\WS} G_1,
 \]
the estimate~\eqref{eq:ineq_exp_G1} implies the convergence of the operator $H_L - H_0$ to $0$ in $\cB(L^2(\R^3))$ with an exponential rate of convergence. 

 \begin{remark}
  The convergence of the operators implies the convergence of the eigenvalues. More specifically, from the min-max principle, we easily deduce that
  \[
			\sup_{\bq \in \BZ} \sup_{n \in \N^\ast} \left| \varepsilon_{n, \bq}[H_{L}] - \varepsilon_{n, \bq} [H_0] \right| \le C \re^{- \alpha L}
	\]
where $\left( \varepsilon_{n,\bq}[H] \right)_{n \in \N^*}$ denotes the eigenvalues of the operator $H_\bq$ ranked in increasing order, counting multiplicities.
 \end{remark}

%
%

\section{Numerical simulations}
\label{sec:numerics_rHF}

In this final section, we illustrate our theoretical results with numerical simulations. The simulations were performed using a home-made Python code, run on a 32 core Intel Xeon E5-2667.

\paragraph{The linear model (Proposition~\ref{prop:expCV}) } ~\\
We consider crystalline silicon in its diamond structure. A qualitatively correct band diagram of this system can be obtained from a linear Hamiltonian of the form $H = -\frac12 \Delta + V_\per^{\rm lin}$, where the potential $V_\per^{\rm lin}$ is the empirical pseudopotential constructed in~\cite{Cohen1966}. The lattice vectors are 
\[
 	\ba_1 = \frac{a}{2} (0, 1, 1)^T, \quad \ba_2 = \frac{a}{2} (1, 0, 1)^T \quad \text{and} \quad \ba_3 = \frac{a}{2} (1, 1, 0)^T
\]
and the reciprocal lattice vectors are
\[
	\ba_1^\ast = \frac{2 \pi}{a} (-1, 1, 1)^T, \quad \ba_2^\ast = \frac{2 \pi}{a} (1, -1, 1)^T \quad \text{and} \quad \ba_3^\ast = \frac{2 \pi}{a} (1, 1, -1)^T,
\]
where the lattice constant is~\cite{Cohen1966} $a = 10.245~\rm{Bohr}$ (that is about $a = 5.43~\textrm{\AA}$). 
The pseudopotential $V_\per^{\rm lin}$ is given by the expression~\cite{Cohen1966}
\begin{equation} \label{eq:Vper_lin}
	V_\per^{\rm lin}(\bx) = \sum_{\bk \in \RLat} V_\bk \re^{ \ri \bk \cdot \bx} \quad \text{with} \quad
	\forall \bk \in \RLat, \quad V_\bk = S[\bk] \cos \left( \frac{a(k_1 + k_2 + k_3)}{8} \right)
\end{equation}
where
\[
	S[\bk] = \left\{ \begin{array}{lll}
		-0.105 &\text{if} & | \bk |^2 = 3 (2\pi/a)^2 \\
		0.02 & \text{if} & | \bk |^2 = 8 (2 \pi /a)^2 \\
		0.04 & \text{if} & | \bk |^2 = 11 ( 2 \pi /a)^2 \\
		0 & \text{otherwise}. &
		\end{array} \right.
\]

%
%
%

This system is an insulator when the number of particle (electron-pairs) $N$ per unit cell is~$N = 4$, so that the hypotheses of Proposition~\ref{prop:expCV} are satisfied. In the sequel, the calculations are performed in the planewave basis
\begin{equation} \label{eq:def:X}
	X = \left\{ e_\bk, \ \bk \in \RLat, \ \frac{\left| \bk \right|^2}{2} < E_\cutoff \right\},
\end{equation}
where the cut-off energy is $E_\cutoff = 736 \ \rm{eV}$. The corresponding size of the basis is $| X | = 749$.  \\

In Figure~\ref{fig:linear_cvExp}, we represent the error on the ground state energy per unit cell and the $L^\infty(\R^3)$ error on the ground state density (in log scale) for different sizes of the regular grid. The value of $L$ in~\eqref{eq:def:LambdaL} varies between $4$ to $28$. The quantities of reference are the ones calculated for the regular grid of size $60$. We observe in Figure~\ref{fig:linear_cvExp} the exponential convergence for both the energy per unit cell and the density as predicted in Proposition~\ref{prop:expCV}.

\begin{figure}[h!]
\begin{center}
\includegraphics[scale = 0.6]{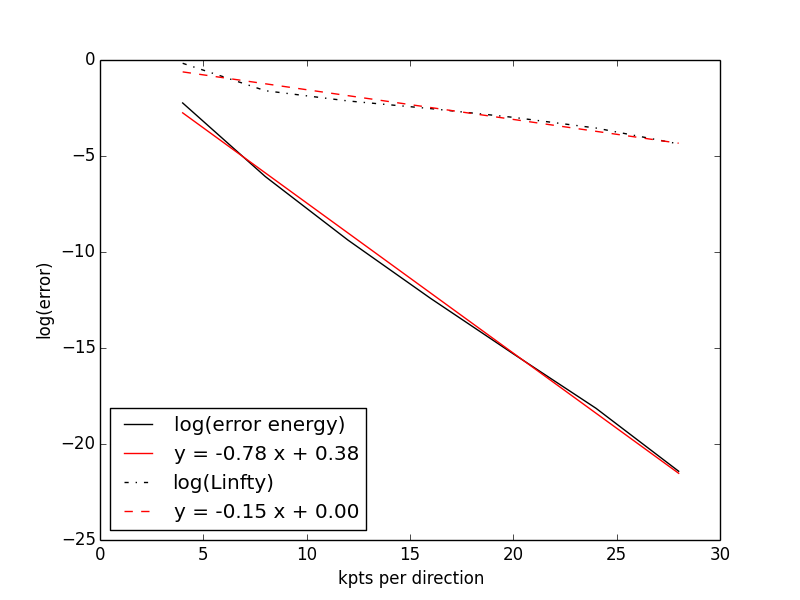}

\caption{The error on the ground-state energy (in eV) and the $L^\infty$ error on the ground-state density with respect to the size of the regular mesh for the linear model. The logarithm of the errors are represented. The linear regression curves are also displayed.}
\label{fig:linear_cvExp}

\end{center}
\end{figure}

\paragraph{The rHF model (Theorem~\ref{th:Deleurence_exp})} ~\\
We now consider the rHF model. To our knowledge, no pseudopotential has ever been designed for this model. Since constructing pseudopotentials is a formidable task, we limit ourselves to the following poor man's solution, which does not aim at capturing the physics but only at illustrating numerically our theoretical convergence results. We decompose the potential self-consistent $V_0$ appearing in~\eqref{eq:sc} into
\[
	V_0 = \left( \rho_{\gamma_0} - \mu_\per \right) \ast_\WS G_1 = \rho_{\gamma_0} \ast_\WS G_1 - \mu_\per \ast_\WS G_1,
\]
and we make the approximation $V_0 = V_\per^{\rm lin}$, where $V_\per^{\rm lin}$ is the pseudopotential defined in~\eqref{eq:Vper_lin}. This leads to the rHF pseudopotential of the form
\[
	V_\per^{\rm HF} := V_\per^{\rm lin} - \rho_{\gamma_0} \ast_\WS G_1.
\]
In practice, we calculate $V_\per^{\rm HF}$ with the potential $\rho_{\gamma_0}$ obtained previously for the grid of size~$60$. The minimization problem~\eqref{eq:finite_problem}-\eqref{eq:sc_supercell} is solved self-consistently in the basis $X$ defined in~\eqref{eq:def:X} (we refer to~\cite{Cances2000} for a survey on self-consistent procedures for such problems). We stop the self-consistent procedure when the $L^\infty(\R^3)$ difference between two consecutive densities is less than~$10^{-7}$. The size of the regular mesh varies between $8$ to $36$. The quantities of reference are the ones calculated for the regular mesh of size $60$. The error on the energy per unit cell and the $L^\infty(\R^3)$ error on the density are displayed in Figure~\ref{fig:rHF_cvExp}.

\begin{figure}[h!]
\begin{center}
\includegraphics[scale = 0.6]{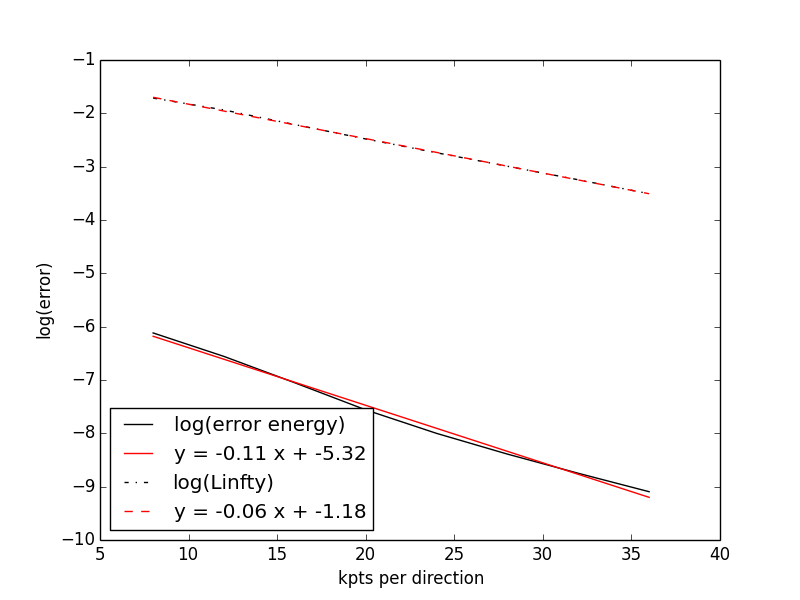}

\caption{The error in energy (in eV) and the $L^\infty$ error of the density with respect to the size of the regular mesh for the rHF model. The logarithm of the errors are represented. The linear regression curves are also displayed.}
\label{fig:rHF_cvExp}

\end{center}
\end{figure}

We observe in Figure~\ref{fig:rHF_cvExp} the exponential convergence announced in Theorem~\ref{th:Deleurence_exp}. 

\section*{Acknowledgements}

We are grateful to Gianluca Panati and Domenico Monaco for answering our questions about analyticity. We also thank Eric Cancès for his suggestions and help to improve the paper.

\bibliographystyle{plain}

\bibliography{perfect_crystal_rHF}

\end{document}